\Crefname{claim}{Claim}{Claims}
\DeclareRobustCommand{\bm}[1]{\boldsymbol{#1}}
\DeclareSymbolFont{bbold}{U}{bbold}{m}{n}
\DeclareSymbolFontAlphabet{\mathbbold}{bbold}
\DeclareMathSymbol{\mathbboldpi}{\mathord}{bbold}{"19}
\DeclareMathSymbol{\mathbboldtau}{\mathord}{bbold}{"1C}
\newcommand{\bbpi}{\mathbboldpi}
\newcommand{\Prob}[2][\mbox{}]{\mathbf{Pr}_{#1}\left[ #2 \right]}
\newcommand{\Expec}[2][\mbox{}]{\mathbb{E}_{#1}\left[ #2 \right]}
\newcommand{\cExpec}[3][\mbox{}]{\mathbb{E}_{#1}\left[ #2 ~\middle|~ #3 \right]}
\newcommand{\Ind}[1]{\mathds{1}_{\{#1\}}}
\newcommand{\simiid}{\mathrel{\stackrel{\scalebox{0.5}{\text{i.i.d.}}}{\sim}}}
\renewcommand{\epsilon}{\varepsilon}
\newcommand{\bx}{\bm{x}}
\newcommand{\R}{\mathbb{R}}
\newcommand{\E}{\mathbb{E}}
\newcommand{\cE}{\mathcal{E}}
\newcommand{\fD}{\mathbf{D}}
\newcommand{\DD}{\mathfrak{D}}
\newcommand{\bX}{\mathbf{X}}
\newcommand{\cC}{\mathcal{C}}
\newcommand{\cH}{\mathcal{H}}
\newcommand{\cHth}{\tilde{\mathcal{H}}}
\newcommand{\opton}{\mathrm{Opt}_{\rm online}}
\newcommand{\optoff}{\mathrm{Opt}_{\rm offline}}
\newcommand{\regret}{\mathrm{Regret}}
\newcommand{\rade}{\mathrm{Rad}}
\newcommand{\Inj}{\mathrm{Inj}}
\renewcommand{\Pr}{\mathbf{Pr}}
\newcommand{\prwd}{p^{\mathrm{rwd}}}
\newcommand{\pbest}{p^{\mathrm{best}}}
\newcommand{\pls}{p^{\mathrm{ls}}}
\newcommand{\pski}{p^{\mathrm{ski}}}
\newcommand{\history}[1]{\mathbf{s}_{#1 - 1}}
\newcommand{\History}[1]{\mathbf{S}_{#1 - 1}}
\newcommand{\Dadv}{\DD^{\mathrm{adv}}}
\newcommand{\Dro}{\DD^{\mathrm{r.o.}}}
\newcommand{\Diid}{\DD^{\mathrm{i.i.d.}}}
\newcommand{\Dall}{\DD^{\mathrm{all}}}
\newcommand{\Piid}{\Pi^{\mathrm{id}}}
\newcommand{\argmax}{\mathop{\rm arg\,max}}
\DeclareMathOperator{\KL}{KL}
\DeclarePairedDelimiterX{\divbrace}[2]{(}{)}{%
    #1\;\delimsize\|\;#2%
}
\newcommand{\KLdiv}{\KL\divbrace}
\title{Online Algorithms for Repeated Optimal Stopping: \\ Balancing Baseline Guarantees and Regret} 
\titlerunning{Online Algorithms for Repeated Optimal Stopping} 
\author{Tsubasa Harada}{Institute of Science Tokyo, Japan}{harada.t.30af@m.isct.ac.jp}{https://orcid.org/0000-0001-8101-4153}{}
\author{Yasushi Kawase}{Chuo University, Japan}{ykawase422@g.chuo-u.ac.jp}{https://orcid.org/0000-0001-5626-779X}{}
\author{Hanna Sumita}{Institute of Science Tokyo, Japan}{sumita@comp.isct.ac.jp}{https://orcid.org/0000-0003-4005-3206}{}
\authorrunning{T. Harada, Y. Kawase, and H. Sumita} 
\keywords{Online Algorithm, Optimal Stopping, Regret Analysis, Competitive Analysis} 
\begin{document}

\maketitle

\begin{abstract}
We study the \emph{repeated optimal stopping problem}, in which the same optimal stopping instance with an unknown distribution is solved repeatedly over $T$ rounds. 
We aim to simultaneously achieve strong per-round performance guarantees relative to a given baseline and sublinear regret across all rounds.
Our primary contribution is a comprehensive theoretical characterization of whether and when these two objectives are compatible.
First, under standard semi-bandit feedback, we prove that maintaining the per-round guarantee forces regret of $\Omega(T / \log T)$.
Second, even under full feedback, we show that requiring almost-sure satisfaction of the per-round guarantee in every round is incompatible with sublinear regret. 
Third, under full feedback, we propose a general algorithmic framework that achieves both sublinear regret and the per-round guarantee with high probability.
Our framework applies to canonical problems, including the prophet inequality, the secretary problem, and their variants under adversarial, random, and i.i.d.\ input models. 
For example, in the repeated prophet inequality problem, our method guarantees that, with high probability in each round, its expected reward is at least that of the classical single-sample algorithm, which achieves a $1/2$ competitive ratio, while simultaneously ensuring $\tilde{O}(\sqrt{T})$ regret.
Furthermore, we establish a regret lower bound of $\Omega(\sqrt{T})$ even in the i.i.d.\ model,
which is nearly tight with respect to the number of rounds. 



\end{abstract}

\section{Introduction}

Many stopping problems are faced repeatedly under a fixed but unknown distribution. For example, a firm may repeatedly observe a sequence of purchase offers and must decide when to accept an offer to sell. Over repeated rounds, it can adapt its stopping rule using past data. The challenge is to learn a strong policy over time without sacrificing performance in individual rounds.
Such requirements arise naturally in settings where poor decisions in individual rounds are costly, even when long-term performance is satisfactory.

Two classical performance measures capture different aspects of this goal. Regret compares an online strategy to the best online algorithm for the underlying distribution and therefore measures how effectively the algorithm learns over time. However, regret only controls performance over the entire sequence of rounds and does not prevent poor performance in particular rounds, especially during exploration. In contrast, the competitive ratio compares against the offline optimum and provides a robust benchmark for single-round performance. However, a fixed competitive algorithm can be overly conservative and may incur linear regret in repeated settings; see our example in \Cref{sec:example}. This motivates the following perspective: use a baseline algorithm with a provable guarantee against the offline optimum as the benchmark for round-by-round performance, while still seeking low regret over repeated rounds.

We study when strong per-round guarantees relative to such a baseline and sublinear regret can be achieved simultaneously. For repeated optimal stopping, we obtain a nearly complete answer. Under standard semi-bandit feedback, maintaining the per-round guarantee relative to the baseline forces regret of $\Omega(T/\log T)$, which in particular rules out any polynomially sublinear regret bound of the form $O(T^{1-\alpha})$ for any constant $\alpha>0$. Even under full feedback, requiring the per-round guarantee to hold with probability $1$ in every round forces linear regret. Thus, the only remaining setting is full feedback together with a high-probability relaxation of the per-round guarantee, for which we provide a general algorithmic framework with nearly optimal guarantees.

At first glance, the positive result in this setting may appear straightforward.
One might expect that, under full feedback, one can achieve both guarantees by pre-scheduling how often to use the baseline versus a learned policy, for example via explore-then-commit or $\epsilon$-greedy strategies.
However, this intuition is misleading.
As we show in \Cref{sec:hardness}, such non-adaptive mixing strategies fundamentally fail to guarantee both sublinear regret and high-probability per-round performance.
This shows that even in the seemingly favorable full-feedback setting, achieving both guarantees is not a straightforward consequence of standard design patterns.

Taken together, our hardness results identify the unique setting in which the two objectives are compatible: full feedback together with a high-probability relaxation of the baseline constraint. Thus, our positive result is not merely an algorithmic construction, but a conceptual characterization of when baseline-preserving learning is possible at all.
This perspective also explains why the algorithmic component may appear simple: once the feasible regime is tightly characterized, a relatively direct learning-based strategy suffices within it. In this sense, the simplicity of the construction reflects a tight understanding of the problem structure, rather than a lack of technical depth.

Our framework treats the baseline algorithm as a black box. Thus, any sample-based algorithm for the single-round stopping problem can be used as a baseline, and its per-round performance is preserved with high probability while the repeated algorithm simultaneously achieves sublinear regret. This leads to applications to canonical stopping problems including prophet inequality, prophet secretary, secretary-type problems, the last-success problem, and ski-rental.

\subsection{Settings}

Let $\fD=(D_1,\dots,D_n;\Pi)$ be a collection of $n$ independent distributions over $[0,1]$ and a distribution $\Pi$ over permutations of $[n]=\{1,2,\dots,n\}$.
Let $\DD$ be a family of such distributions.
Let $p$ be a known profit function $p\colon [0,1]^n\times [n+1]\to [0,B]$.\footnote{With a slight abuse of terminology, we refer to $p$ as a ``profit function'' even for cost minimization.}
An optimal stopping problem is specified by $(\DD,p)$, and an instance is $(\fD,p)$ with $\fD\in\DD$.

We draw $(\bX, \pi) \sim \fD$, where $\bX=(X_1,\dots,X_n)=(Y_{\pi(1)},\dots,Y_{\pi(n)})$ and $Y_i\sim D_i$.
We write $\bx = (x_1,\ldots, x_n)$ and $\tau$ for realizations of $\bX$ and $\pi$, respectively.
Let $\Inj([i],[n])$ denote injective maps from $[i]$ to $[n]$.
For $i\in[n]$, $\bx\in[0,1]^n$, and $\tau\in\Inj([n],[n])$, we write $\bx_{\le i}\coloneqq(x_1,\dots,x_i)\in[0,1]^i$ and $\tau_{\le i}\coloneqq(\tau(1),\dots,\tau(i))\in\Inj([i],[n])$.

We remark that the distribution $\Pi$ captures the arrival model.
If $\Pi$ is a point-mass at a single permutation, then the problem setting corresponds to the \emph{adversarial order} model, and if $\Pi$ is a uniform distribution, then it corresponds to the \emph{random order} model.
Thus, our framework includes these major arrival models.
In particular, the prophet inequality corresponds to the case where $\Pi$ is a point-mass, while the prophet secretary problem corresponds to the case where $\Pi$ is uniform.

At each step $i\in [n]$, the value $x_i$ and the index $\tau(i)$ are revealed to the algorithm. 
The algorithm knows that $x_i$ is drawn from $D_{\tau(i)}$.
Based on the observed values $\bx_{\le i}$ and $\tau_{\le i}$, the algorithm must decide irrevocably whether to (a) accept $x_i$ and stop (i.e., automatically reject the subsequent values), or (b) reject $x_i$ and proceed to the next step.


Accepting $x_i$ yields a profit of $p(\bx,i)$.
If all values are rejected, the profit is $p(\bx,n+1)$.
The profit function $p$ subsumes the classical prophet-type problems, e.g., the prophet inequality and the prophet secretary problem, in which $p(\bx,i)=x_i$ for $i\in[n]$, and $p(\bx,n+1)=0$, corresponding to the case where no value is accepted.


In the \emph{repeated optimal stopping problem} $(\DD, p; T)$, the same instance $(\fD, p)$ is solved over $T$ rounds.
In each round $t \in [T]$,
a sample $(\bx^{(t)},\tau^{(t)})\sim\fD$ is drawn, and the algorithm makes sequential decisions for $(\bx^{(t)},\tau^{(t)})$.
After each round, feedback is revealed as follows.
In the full-feedback model, the entire pair $(\bx^{(t)},\tau^{(t)})$ is observed.
In the semi-bandit model, letting $i^{(t)}\in[n+1]$ be the stopping time (with $i^{(t)}=n+1$ if no item is accepted), only the prefix $(\bx^{(t)}_{\le i^{(t)}},\tau^{(t)}_{\le i^{(t)}})$ is observed.
Just before round $t$, the algorithm selects a possibly randomized online algorithm $h_t$ for round $t$ based on the feedback observed in the previous $t-1$ rounds.
The goal is to maximize the sum of profits obtained over $T$ rounds.




\subsection{Competitive Ratio and Regret}

Next, we introduce the performance measures of \emph{competitive ratio} and \emph{regret}, highlighting their differences.

A \emph{deterministic online algorithm} for the optimal stopping problem $(\DD,p)$ is defined by a sequence of $n$ functions $h = (f_i)_{i=1}^n$, where each $f_i\colon [0,1]^i\times \Inj([i],[n]) \to \{0,1\}$ takes as input the first $i$ observed values $\bx_{\le i}$ and $\tau_{\le i}$, and outputs $1$ if the algorithm chooses to accept at step $i$, and $0$ otherwise. 
The algorithm proceeds sequentially: it selects the first index $i \in [n]$ such that $f_i(\bx_{\le i},\tau_{\le i}) = 1$, and stops; all subsequent variables are rejected. If $f_i(\bx_{\le i},\tau_{\le i}) = 0$ for all $i \in [n]$, the algorithm rejects all the indices and obtains  $p(\bx,n+1)$.
Let $\cH$ denote the set of all deterministic online algorithms.
A \emph{randomized online algorithm} is a probability distribution over $\cH$.


For a profit function $p$, an algorithm $h = (f_i)_{i\in[n]}$, and a pair $(\bx,\tau)\in[0,1]^n\times \Inj([n],[n])$, we define the profit of $h$ for $(\bx,\tau)$ as 
\begin{align}
    h(\bx,\tau;p) \coloneqq p(\bx, \min \{{i \in [n+1]} \mid i=n+1\text{ or }f_{i}(\bx_{\le i},\tau_{\le i})=1\}).
\end{align}
For an optimal stopping problem instance $(\fD,p)$, the expected profit of $h$ is defined by
\begin{align}
    h(\fD;p) \coloneqq \E_{(\bX,\pi) \sim \fD}[h(\bX,\pi;p)],
\end{align}
where the expectation also includes the internal randomness of $h$ if any.
When $p$ is clear, we write $h(\bx,\tau)$ and $h(\fD)$.
We define the \emph{optimal online profit} by
\begin{align}
\textstyle
\opton(\fD;p)\coloneqq
\sup_{h\in \cH} h(\fD;p)
=\sup_{h\in \cH}\mathbb{E}_{(\bX,\pi)\sim \fD}\left[h(\bX,\pi;p)\right],
\end{align}
and the \emph{optimal offline profit} by
\begin{align}
\textstyle
\optoff(\fD;p)\coloneqq
\mathbb{E}_{(\bX,\pi)\sim \fD}\left[\max_{i\in[n+1]}p(\bX,i)\right],
\end{align}
which corresponds to the expected profit achievable when the realizations are completely known in advance.
For minimization problems, $\sup$ and $\max$ are replaced by $\inf$ and $\min$, respectively. In particular, $\opton(\fD;p)\le \optoff(\fD;p)$ for maximization, and the inequality reverses for minimization.
The \emph{competitive ratio} of $h$ on an instance $(\fD,p)$ is $h(\fD;p)/\optoff(\fD;p)$. For a problem class $(\DD,p)$, it is defined as
\[
\inf_{\fD\in\DD} {h(\fD;p)}/{\optoff(\fD;p)}
\quad
(\text{resp.\ } \sup_{\fD\in\DD} {h(\fD;p)}/{\optoff(\fD;p)})
\]
for maximization (resp.\ minimization).



In the repeated optimal stopping problem $(\fD,p;T)$, the algorithm selects an online algorithm for each round $t$ based on the observation $\history{t}$ in the previous rounds.
In the full-feedback setting, $\history{t}$ consists of full realizations, i.e., $\history{t} = (\bx^{(s)}, \tau^{(s)})_{s \in [t-1]}$, while in the semi-bandit setting it represents the observed prefixes $(\bx^{(s)}_{\le i^{(s)}}, \tau^{(s)}_{\le i^{(s)}})_{s \in [t - 1]}$ up to the stopping times $i^{(s)}$ in each round.
We denote by $\History{t}$ the corresponding random variable.
Let $h_t[\history{t}]$ be the online algorithm chosen in round $t$. 
For a fixed realization $\history{t}$ of past observations, we denote the expected reward in round $t$ by
\[
h_t[\history{t}](\fD; p)
\coloneqq
\mathbb{E}_{(\bX, \pi) \sim \fD}
\left[
h_t[\history{t}](\bX, \pi; p)
\right],
\]
where the expectation is taken over the randomness of the current round and the internal randomness of the algorithm.
Thus, $h_t[\History{t}](\fD; p)$ is a random variable induced by the randomness of past observations.
For brevity, we write $h_t$ when the dependence on $\History{t}$ is implicit, and $h_t(\fD)$ for $h_t(\fD; p)$ when $p$ is clear.
Then, the competitive ratio of the algorithm in round $t$ is defined as
\begin{align}
{\Expec[\History{t}]{h_t[\History{t}](\fD; p)}}\big/{\optoff(\fD;p)}.
\end{align}
The \emph{regret} of the algorithm is defined as
\begin{align}
\regret \coloneqq 
\Expec{\sum_{t=1}^T \left(\opton(\fD;p) - h_t[\History{t}](\fD;p)\right)}.
\label{eq:def-regret}
\end{align}
where the expectation is taken over the randomness of the past observations.
For a cost minimization problem, the regret is defined as the negation of~\eqref{eq:def-regret}. 
Since the regret is trivially at most $B\cdot T$, a regret bound that is sublinear in $T$ is required.



\subsection{Related Work}

Optimal stopping is a central topic in online optimization, with prophet inequality and secretary/best-choice problems as canonical examples.
Their optimal competitive ratios are $1/2$ and $1/e$, respectively~\cite{Krengel1977-uu,dynkin1963optimum}.


When $\fD$ is unknown and only samples are available, one obtains optimal stopping with samples.
For the prophet inequality, a single sample per distribution already yields a $1/2$-competitive policy~\cite{rubinstein_et_al:LIPIcs:2020:11745}; for secretary-type settings see~\cite{DBLP:conf/soda/NutiV23}. See also~\cite{azar2014prophet,correa2020two,correa2024sample,correa2019prophet,correa2021unknown,caramanis2022single,kaplan2020competitive,yoshinaga2024last,cristi2024prophet}.

The \emph{metrical task system} (MTS) has long been a central subject in competitive analysis, and it is also applicable to regret analysis through the framework of \emph{online convex optimization with switching costs}.
As a result, some efforts have been made to design algorithms for MTS that perform well with respect to both competitive ratio and regret~\cite{andrew2013tale,Buchbinder2016,Goel2023,pmlr-v98-daniely19a}.
These studies leverage the repetitive structure inherent in MTS, which makes it challenging to directly apply such approaches to the optimal stopping problem considered in this paper.

Research concerning repeated settings of optimal stopping problems has garnered significant attention in recent years, with particularly vigorous studies focused on the prophet inequality (and the Pandora's box problem) \cite{guo2019settling,fu2020learning,pmlr-v134-guo21a,jin2024sample,Gatmiry2024,agarwal2024semi}. Within the framework of PAC learning,
Jin et al.~\cite{jin2024sample}
showed that $\tilde{O}(1/\epsilon^2)$ samples are sufficient to find an $\epsilon$-additive approximation of the optimal online algorithm with probability at least $1-\delta$ for the prophet inequality.
%
Gatmiry et al.~\cite{Gatmiry2024} proposed an algorithm for the repeated prophet inequality that achieves a regret of $\tilde{O}(n^3 \sqrt{T})$ under partial feedback, where only the values of the selected variables are observed. They also established a regret lower bound of $\Omega(\sqrt{T})$ using two random variables drawn from different distributions.
Agarwal et al.~\cite{agarwal2024semi} introduced a general framework for stochastic optimization problems satisfying monotonicity under semi-bandit feedback, and they derived an upper bound on the regret of $\tilde{O}(n\sqrt{T})$. Recently, this result was improved to $\tilde{O}(\sqrt{nT})$ for the prophet inequality and the Pandora's box problem~\cite{liu2025improved}.
Furthermore, Shah and Rajkumar~\cite{Shah2021} performed a regret analysis for the repeated ski-rental problem.
Note that these studies primarily focus on regret analysis and sample complexity, and do not address the per-round performance.

In contrast, our results show that once a per-round baseline guarantee is imposed, the problem fundamentally differs from standard regret minimization: even in repeated optimal stopping, semi-bandit feedback is insufficient, and full feedback alone does not suffice without relaxing the guarantee to hold with high probability.



\subsection{Hardness Results}
\label{sec:hardness}

One of our main contributions is to provide a comprehensive characterization of the feasibility of achieving both a per-round guarantee against a baseline and sublinear regret.
Our results reveal a fundamental tension between learning for long-term regret minimization and maintaining per-round guarantees. By explicitly ruling out the feasibility of several natural problem settings and approaches, we isolate the exact conditions under which our two objectives can be achieved.
In particular, our results rule out the two most natural directions of extension: 
weakening the feedback model and strengthening the baseline requirement.

\subsubsection{Hardness in the Semi-Bandit Setting}
\label{subsec:hardness-semibandit}

We first consider the semi-bandit setting, which is a natural model for many practical applications. However, we prove that in the semi-bandit setting, it is impossible to simultaneously achieve a high-probability guarantee that the expected profit is at least that of the baseline in each round and a regret of $O(T^{1-\alpha})$ for any $\alpha > 0$.
The proof is deferred to \Cref{app:pf-semibandit-impossible}.

\begin{theorem}
\label{thm:semibandit-impossible}
In the semi-bandit feedback setting,
there exist instances of a repeated optimal stopping problem $(\fD_+,p;T)$ and $(\fD_-,p;T)$, and a sequence of baseline algorithms $(g_t)_{t \in [T]}$ such that the following holds:
for any algorithm sequence $(h_t)_{t \in [T]}$ for $(\fD_+,p;T)$ and $(\fD_-,p;T)$, at least one of the following \emph{does not} hold:
(i)~The regret of $(h_t)_{t \in [T]}$ for $(\fD_+,p;T)$ is $o(T / \log T)$, and
(ii)~$\Prob{h_t(\fD) \ge g_t(\fD)} \ge 1 - O(1/t)$ for each $t\in[T]$ and each $\fD \in \{\fD_+, \fD_-\}$.
\end{theorem}

This theorem implies that our objective cannot be achieved in the semi-bandit setting.
Consequently, we must turn to the full-feedback setting as a necessary structural relaxation.


\subsubsection{Hardness of Almost-Sure Guarantees under Full Feedback}

We now shift our focus to the full-feedback setting, where the algorithm observes the outcomes of all realizations after each round.

One of the difficulties in the semi-bandit setting stems from the per-round guarantee with respect to the baseline. In order to guarantee that the expected profit is at least that of the baseline at each round, the learner may be forced to avoid actions whose profits are uncertain, even if such actions are necessary for learning the underlying distribution. Thus, the baseline requirement fundamentally conflicts with the need to gather informative samples.

In contrast, under full feedback, this difficulty disappears: regardless of the action taken, the learner observes all realizations, and hence can collect information without being constrained by baseline considerations. This might suggest that the problem becomes significantly easier in the full-feedback setting.
In particular, one might hope that it is possible to guarantee that the expected reward is always at least that of the baseline, i.e., with probability one at every round. 

The following theorem shows that this is not the case: even under full feedback, such almost-sure guarantees are incompatible with sublinear regret. See \Cref{sec:impossible} for the proof.

\begin{theorem}
\label{thm:full-feedback-as-impossible}
In the full-feedback setting,
there exist instances of a repeated optimal stopping problem $(\fD_+,p;T)$ and $(\fD_-,p;T)$, and a sequence of baseline algorithms $(g_t)_{t \in [T]}$ such that the following holds:
for any algorithm sequence $(h_t)_{t \in [T]}$ for $(\fD_+,p;T)$ and $(\fD_-,p;T)$, at least one of the following \emph{does not} hold:
(i)~The regret of $(h_t)_{t \in [T]}$ for the instance $(\fD_+,p;T)$ is $o(T)$, and
(ii)~$\Prob{h_t(\fD_-) \ge g_t(\fD_-)} = 1$ for each round $t \in [T]$.
\end{theorem}

This strict impossibility dictates that the safety guarantee must be relaxed to hold with high probability.

\subsubsection{Hardness of High-Probability Guarantees under Full Feedback}

The preceding results rule out two natural approaches for achieving a guarantee with respect to a baseline.
In the semi-bandit setting, \Cref{thm:semibandit-impossible} shows that even high-probability guarantees are incompatible with sublinear regret.
In the full-feedback setting, \Cref{thm:full-feedback-as-impossible} further shows that requiring almost-sure guarantees is also impossible.
Taken together, these results leave only one remaining possibility: to achieve a per-round guarantee with high probability in the full-feedback setting.

At first glance, this goal appears attainable via a simple mixing strategy.
In particular, one may choose, at each round $t$, a baseline usage probability $\epsilon_t$, and use the baseline with probability $\epsilon_t$, while otherwise deploying an empirically optimal online algorithm.
This suggests that one might be able to balance guarantee and regret by carefully tuning $(\epsilon_t)_{t \in [T]}$.
Such mixing strategies capture many standard algorithms, including explore-then-commit (ETC), which uses the baseline up to some round $m$ and commits thereafter, $\epsilon$-greedy strategies that use the baseline with a fixed probability at each round, and strategies that always deploy the empirically best-performing algorithm based on past observations.

However, \Cref{prop:naive-failure} shows that such a natural approach is fundamentally insufficient.

\begin{proposition}\label{prop:naive-failure}
There exist a repeated optimal stopping problem and a sequence of baseline algorithms $(g_t)_{t \in [T]}$ for $(\DD,p;T)$ such that the following holds:

Let $h_t$ be the empirically optimal online algorithm for $(\DD,p;T)$ based on $t-1$ samples.
For any sequence of baseline usage probabilities $(\epsilon_t)_{t \in [T]}$, define $h'_t$ as the algorithm that uses $g_t$ with probability $\epsilon_t$ and $h_t$ otherwise at round $t$.
Then, there exist distributions $\fD_+, \fD_- \in \DD$ such that at least one of the following \emph{does not} hold:
(i)~The regret of $(h'_t)_{t \in [T]}$ is $o(T)$ for $(\fD_+,p;T)$, and
(ii)~For at least $T/2$ rounds $t \in [T]$, $\Prob{h'_t(\fD_-) \ge g_t(\fD_-)} < 1/8$.
\end{proposition}

The proof is deferred to \Cref{app:pf-prop:naive-failure}.
This proposition implies that simple strategies based on pre-scheduling the baseline usage probabilities $(\epsilon_t)_{t \in [T]}$ and switching between the baseline and the empirically optimal online algorithm cannot achieve both sublinear regret and a high-probability guarantee with respect to the baseline, even under full-feedback settings.
Therefore, achieving both guarantees requires fundamentally more sophisticated strategies than such non-adaptive scheduling: one must simultaneously estimate the expected rewards of the baseline and the empirically optimal algorithm, and make decisions adaptively based on these estimates.

\subsection{Positive Results}
\label{subsec:our-results}

Having established in \Cref{sec:hardness} that balancing per-round guarantee and sublinear regret is theoretically viable only under full feedback with a high-probability relaxation, we now present our main positive result. Our main contribution is a general framework that achieves these objectives for the repeated optimal stopping problem.


\begin{theorem}[informal statement of \Cref{thm:main}]\label{thm:main-informal}
Let $g_t$ be an arbitrary online algorithm for $(\fD,p)$ with $t-1$ samples for each $t\in[T]$.
Suppose the profit function $p$ satisfies mild assumptions. 
Then we can construct $(h_t)_{t\in[T]}$ such that
(i)~$h_t=g_t$ for each $t\le B\sqrt{\kappa_{n,\Pi}T}$,
(ii)~$\Pr[h_t(\fD)\ge g_{\lfloor (t+1)/2\rfloor}(\fD)]\ge 1-O(t^{-\kappa_{n,\Pi}})$ for each $t\in[T]$,
and (iii)~$\regret=O(B\sqrt{\kappa_{n,\Pi}T\log T})$,
where $\kappa_{n,\Pi}$ depends only on the problem setting.
\end{theorem}

\Cref{thm:main-informal} has an important structural implication central to all our applications.
Our framework treats algorithm $g_t$ as a black box: $g_t$ can be \emph{any}
sample-based online algorithm for the single-round problem $(\fD,p)$, and our
construction preserves its per-round performance.
Consequently, if $g_t$ enjoys a strong per-round performance guarantee (e.g., in terms of competitive ratio), then the resulting sequence $(h_t)_{t\in[T]}$ preserves this baseline performance in each round while simultaneously achieving sublinear regret.
In addition, the theorem implies that the competitive ratio for the overall profit is asymptotically optimal, i.e.,
\begin{align}
     \frac{\Expec[(\bX^{(1)},\pi^{(1)}),\ldots, (\bX^{(T)},\pi^{(T)}) \simiid \fD]{\sum_{t=1}^T h_t[\History{t}](\fD;p)}}{T\cdot \optoff(\fD;p)} = (1-o(1))\cdot \frac{\opton(\fD;p)}{\optoff(\fD;p)}.
\end{align}

Our framework is applicable to a repeated setting on broad classes of optimal stopping problems.
We refer to a class of the repeated optimal stopping problem $(\DD,p;T)$ as ``the repeated X'' whenever $(\DD,p)$ corresponds to problem X.\footnote{In the literature, the term ``secretary'' is used in two different senses: (i) to indicate the \emph{random order} arrival model and (ii) to denote the \emph{best-choice} objective of selecting the overall maximum.}
We consider the following profit functions, each satisfying the mild assumptions required by our analysis: 
reward $\prwd(\bx,i)=x_i$, 
best choice $\pbest(\bx,i)=\Ind{x_i=\max_{j\in[n]}x_j}$, 
last success $\pls(\bx,i) = \Ind{i = \max\{i'\mid x_{i'}=1\}}$, and ski-rental $\pski(\bx,i) = \sum_{j=1}^{i-1}x_j+b$.
We set $p(\bx,n+1) = 0$ for $p \in \{\prwd, \pbest, \pls\}$.

Similarly, we define four distribution models: 
the adversarial order model $\Dadv$, 
the random order model $\Dro$, 
the i.i.d.\ distribution model $\Diid$, and 
the general order model $\Dall$ (where $\Pi$ is an arbitrary distribution over permutations of $[n]$).
%

For the repeated prophet inequality problem, represented as $(\Dadv,\prwd;T)$, our framework equipped with the single-sample algorithm~\cite{rubinstein_et_al:LIPIcs:2020:11745} as a baseline guarantees a competitive ratio of $1/n$ in round $1$ and $1/2$ with high probability (w.h.p.) for rounds $t\geq 2$. 
A similar guarantee holds for the repeated general order prophet inequality problem $(\Dall, \prwd; T)$.
Table~\ref{tab:applications} summarizes the results obtained from Theorem~\ref{thm:main} with existing baseline algorithms for well-known problems.
Full details are given in Appendix~\ref{sec:applications}.
We remark that, for any setting, the regret of our algorithm with respect to $T$ is bounded by $O(B\sqrt{\kappa_{n,\Pi} T\log T})$, where $\kappa_{n,\Pi}$ is a constant determined by the problem setting and independent of the number of rounds $T$.
For instance, when the distribution is $\Dadv$, we have $\kappa_{n,\Pi} = n$ and the regret is bounded by $O\bigl(B\sqrt{nT\log T}\bigr)$.

 
\begin{table}[t]
\centering
\small
\caption{Competitive ratios guaranteed by our framework.
All ratios for later rounds hold w.h.p.\
(with probability ${\geq}\, 1-O(t^{-\kappa_{n,\Pi}})$).
The regret in every case is $O(B\sqrt{\kappa_{n,\Pi}\, T\log T})$.
}
\label{tab:applications}
\begin{tabular}{@{}lccccc@{}}
\toprule
Problem $(\DD, p)$
  & $t=1$ & $t=2$ & $t\geq 3$ (w.h.p.)
  & $\kappa_{n,\Pi}$ & Baseline \\
\midrule
Prophet ineq.\ $(\Dadv,\prwd)$
  & $1/n$ & $1/2$ & $1/2$
  & $n$
  & \cite{rubinstein_et_al:LIPIcs:2020:11745} \\
Prophet sec.\ $(\Dro,\prwd)$
  & $1/e$ & $1/2$ & $0.688{-}O(t^{-1/5})$
  & $n!$
  & \cite{ferguson1989solved,rubinstein_et_al:LIPIcs:2020:11745,cristi2024prophet} \\
I.i.d.\ prophet ineq.\ $(\Diid,\prwd)$
  & $1/e$ & $1{-}1/e$ & $0.745{-}O(t^{-1})$
  & $n$
  & \cite{correa2019prophet,correa2024sample} \\
Adv.\ order sec.\ $(\Dadv,\pbest)$
  & $1/n$ & $1/4$ & $1/4$
  & $n$
  & \cite{DBLP:conf/soda/NutiV23} \\
R.o.\ secretary $(\Dro,\pbest)$
  & $1/e$ & $0.5009$ & $0.5009$
  & $n!$
  & \cite{DBLP:conf/soda/NutiV23} \\
Last success $(\Dall,\pls)$
  & $1/n$ & $1/4$ & $1/4$
  & $n$
  & \cite{yoshinaga2024last} \\
Ski-rental $(\Dall,\pski)$
  & $e/(e{-}1)$ & $e/(e{-}1)$ & $e/(e{-}1)$
  & $n$
  & \cite{karlin1994competitive} \\
\bottomrule
\end{tabular}
\end{table}

We also establish a lower bound of $\Omega(\sqrt{T})$ on the regret when $n=2$, the profit function is $\prwd$ or $\pbest$, and the distribution includes $\Diid$. 
Therefore, by ignoring $\kappa_{n,\Pi}$, our regret bound is tight with respect to $T$, up to a logarithmic factor.

Furthermore, for the repeated prophet inequality problem $(\Dadv,\prwd;T)$, we also obtain the following result, which improves the dependence of the regret on $\kappa_{n,\Pi}$ at the cost of the probability of guaranteeing the baseline performance.

\begin{theorem}[informal statement of~\Cref{thm:PI-refined-restate}]
\label{thm:PI-refined}
    For the repeated prophet inequality problem $(\Dadv,\prwd;T)$ and any $\fD \in \Dadv$, let $g_t$ be any sample-based algorithm. 
    Then, we can compute a sequence of algorithms $(h_t)_{t\in[T]}$ that achieves the following:
    (i)~for each $t \le \lfloor\sqrt{T}\rfloor$, $h_t = g_t$,
    (ii)~for each $t > \lfloor\sqrt{T}\rfloor$, with probability at least $1-2/t$, $h_t(\fD) \ge g_{\lfloor (t+1)/2\rfloor}(\fD)$, and
    (iii)~the regret of $(h_t)_{t\in[T]}$ is $O(\sqrt{T}\log T)$.
\end{theorem}

By combining the lower bound $\Omega(\sqrt{T})$ with \Cref{thm:PI-refined}, our result for the repeated prophet inequality guarantees the best possible performances.

\section{Preliminaries}\label{sec:preliminaries}



In this paper, we assume that the profit function $p$ satisfies the following two properties:
for any $i,j\in[n]$ and $x_1,\dots,x_n,x_i'\in[0,1]$ with $x_i\le x'_i$ and $j>i$, 
\begin{enumerate}[label=\textbf{(P\arabic*)}, ref=P\arabic*, leftmargin=*]
\item \label{prop:p-i} $p((x_1,\dots,x_i,\dots,x_n),i)\le p((x_1,\dots,x'_i,\dots,x_n),i)$
\item \label{prop:p-ii} $p((x_1,\dots,x_i,\dots,x_n),j)\ge p((x_1,\dots,x'_i,\dots,x_n),j)$ 
\end{enumerate}
These conditions are satisfied by the profit functions of standard optimal stopping problems: expected reward $\prwd$, best choice $\pbest$, last success $\pls$, and ski-rental $\pski$.
Note that the maximum profit value $B$ can be set to $1$ for the expected reward, best choice, last success, and can be set to $n+b$ for the ski-rental setting.
An example of a profit function that does not satisfy properties (\ref{prop:p-i}) and (\ref{prop:p-ii}) is the scenario in which one aims to maximize the probability of selecting the second-largest value~\cite{vanderbei2021postdoc}.

Next, we define several classes of distributions \( \fD=(D_1,\dots,D_n;\Pi) \).
Let \( \Dall \) be the set where \( \Pi \) is any distribution over permutations of \( [n] \).
Let \( \Dadv \) be the set where \( \Pi \) is the point mass at the identity permutation, denoted by \( \Piid \).
Let \( \Dro \) be the set where \( \Pi \) is the uniform distribution over all permutations.
Let \( \Diid \) be the set where \( \Pi=\Piid \) and \( D_1=\cdots=D_n \).
Finally, let the forward-backward class be the set where \( \Pi \) is uniform over the identity permutation and its reverse.
It holds that \( \Diid\subsetneq\Dadv\subsetneq\Dall \) and \( \Diid\subsetneq\Dro\subsetneq\Dall \).






\subsection{Threshold Algorithm}\label{sec:threshold}

In this subsection, we show that any optimal online algorithm is a threshold algorithm.
A threshold algorithm is an algorithm $h = (f_i)_{i=1}^n$ specified by a sequence of functions $f_i\colon [0,1]^{i}\times\Inj([i],[n]) \to \{0,1\}$ for each $i\in[n]$, such that
$f_i(\bx_{\le i},\tau_{\le i}) = 1$ if $x_i>\theta_i(\tau_{\le i})$ and $0$ if $x_i<\theta_i(\tau_{\le i})$,
where $\theta_i \colon \Inj([i],[n]) \to [0,1]\cup\{\infty\}$ is a threshold function that depends only on the order of first $i$ values $\tau_{\le i}$. 
If $x_i=\theta_i(\tau_{\le i})$, the value of $f_i(\bx_{\le i},\tau_{\le i})$ can be either $0$ or $1$.
$\theta_i$ may take value $\infty$, meaning the $i$th variable is never accepted.

\begin{lemma}\label{lem:threshold}
Suppose that the profit function $p$ satisfies properties (\ref{prop:p-i}) and (\ref{prop:p-ii}).
Furthermore, assume that the following condition holds:
\begin{enumerate}[label=\textbf{(A\arabic*)}, ref=A\arabic*, font=\upshape, leftmargin=*]
    \item \label{assumption:q-exists}
    for each $i\in[n]$, there exists a function $q_i\colon [0,1]^{n-i+1}\times\{i,\dots,n+1\}\to\mathbb{R}$ such that $q_i(\bx_{\ge i},j)=p(\bx,j)-p(\bx,i)$ for any $j\in\{i,\dots,n+1\}$ and $\bx\in[0,1]^n$.
\end{enumerate}
Then, there exists a threshold algorithm $h^* =(f^*_i)_{i=1}^n$ such that $\opton(\fD,p) = h^*(\fD,p)$.
\end{lemma}

The proof is deferred to \Cref{subsec:proof threshold}.
In the proof, we show a setting of a threshold $\theta_i(\tau_{\le i})$ using the assumption (\ref{assumption:q-exists}).

We remark that the threshold $\theta_i(\tau_{\le i})$ generally depends not only on the set $\{\tau(1),\dots,\tau(i-1)\}$ but also on the specific order $\tau_{\le i-1}$ of the first $i-1$ variables. This is because the conditional distribution of the remaining permutations may depend on the observed order so far. However, when $\Pi$ is the uniform distribution over all permutations (i.e., the random order model), the conditional distribution is independent of the observed order, and thus it suffices for the threshold to depend only on $\{\tau(1),\dots,\tau(i-1)\}$ and $\tau(i)$.

\begin{corollary}
If the profit function $p$ is given by expected reward, last success, or ski-rental, 
then there exists a threshold algorithm that is an optimal online algorithm. 
\end{corollary}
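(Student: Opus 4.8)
The plan is simply to check, for each of the three profit functions, the hypothesis~\eqref{eq:threshold-assump} of \Cref{lem:threshold} and then invoke that lemma; the two monotonicity properties~\ref{prop:p-i} and~\ref{prop:p-ii} have already been observed to hold for all of them in the preliminaries. The substance of~\eqref{eq:threshold-assump} is that the difference $p(\bx,j)-p(\bx,i)$ must be expressible as a function of the tail $\bx_{\ge i}=(x_i,\dots,x_n)$ alone, i.e.\ it must be invariant to the already-rejected prefix $(x_1,\dots,x_{i-1})$. I would verify this tail-dependence condition coordinate by coordinate in each case and read off the witnessing function $q_i$.

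For the expected reward $\prwd$, the difference is $p(\bx,j)-p(\bx,i)=x_j-x_i$ (recall the convention $x_{n+1}=0$), which for $j\ge i$ involves only the tail coordinates $x_i$ and $x_j$; hence $q_i(\bx_{\ge i},j)=x_j-x_i$ works. For ski-rental $\pski$, a telescoping cancellation does the job: for $i\le j\le n$ one gets $p(\bx,j)-p(\bx,i)=\sum_{k=i}^{j-1}x_k$, and for $j=n+1$ one gets $\sum_{k=i}^{n}x_k-b$. In both subcases the common prefix term $\sum_{k=1}^{i-1}x_k+b$ cancels, leaving an expression depending only on $\bx_{\ge i}$.

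The last-success case $\pls$ is the only one requiring a moment's thought, since the last-success index $\max\{i'\mid x_{i'}=1\}$ is a priori a global quantity. The key observation is that the profit $p(\bx,i)$ \emph{itself} already depends only on $\bx_{\ge i}$: the event ``$i$ is the last success'' is equivalent to ``$x_i=1$ and $x_{i'}\ne 1$ for all $i'>i$,'' which is determined entirely by $(x_i,x_{i+1},\dots,x_n)$. Consequently each $p(\bx,j)$ with $j\ge i$ depends only on the coordinates $x_j,\dots,x_n$, all of which lie in $\bx_{\ge i}$, so the difference $q_i(\bx_{\ge i},j)=p(\bx,j)-p(\bx,i)$ is well defined as a function of $\bx_{\ge i}$. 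Having checked~\eqref{eq:threshold-assump} in all three cases, \Cref{lem:threshold} immediately yields a threshold algorithm that is optimal online. I anticipate no real obstacle beyond the last-success observation; everything else is a routine verification of the tail-invariance condition, and the only thing to watch is the boundary index $j=n+1$ together with the convention $x_{n+1}=0$.
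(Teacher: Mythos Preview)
Your proof is correct and follows exactly the approach the paper takes: the paper simply asserts, without details, that ``it is not difficult to see that the profit function $p$ for expected reward, last success, and ski-rental satisfies the assumption~\eqref{eq:threshold-assump}'' and then states the corollary as a consequence of \Cref{lem:threshold}. Your write-up supplies precisely the verifications the paper omits, including the telescoping for ski-rental and the observation that $\pls(\bx,i)$ already depends only on $\bx_{\ge i}$.
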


Unfortunately, the profit function $p$ for best choice $p(\bx,i) = \Ind{x_i = \max_{j \in [n]} x_j}$ does not satisfy the assumption (\ref{assumption:q-exists}).
However, we can construct an optimal online algorithm that is almost a threshold algorithm.
If $x_i\ge \max_{j\in[i-1]}x_j$, then $p(\bx,j)-p(\bx,i)$ depends on $\bx_{\geq i}$ and $j$, and hence we can define $q_i(\bx_{\geq i}, j) = p(\bx,j)-p(\bx,i)$ for any $j \geq i$.
We can set a threshold $\theta_i(\tau_{\leq i})$ for such a case (see \Cref{subsec:proof threshold} for details).
If $x_i<\max_{j\in[i-1]}x_j$, then there is no loss in rejecting the $i$th variable.
Thus, the following claim holds.
\begin{corollary}
If the profit function $p$ is given by best choice, then there exists an optimal online algorithm $h=(f_i)_{i=1}^n$ and a threshold algorithm $h'=(f_i')_{i=1}^n$ such that
\begin{align}
f_i(\bx_{\le i},\tau_{\le i})=\begin{cases}
f'_i(\bx_{\le i},\tau_{\le i}) & \text{if }x_i\ge \max_{j\in[i-1]}x_j,\\
0                              & \text{if }x_i< \max_{j\in[i-1]}x_j.
\end{cases}
\end{align}
\end{corollary}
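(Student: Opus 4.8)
The plan is to establish the two branches of the claimed piecewise form separately: the non-record case $x_i<\max_{j\in[i-1]}x_j$ will follow from a direct dominance argument, while the record case $x_i\ge\max_{j\in[i-1]}x_j$ will be reduced to the threshold construction already carried out in the proof of \Cref{lem:threshold}. I would start from any optimal online algorithm and modify it so as to acquire the desired structure without decreasing its expected profit; the modification at non-records is baked into the backward-induction computation of the value-to-go so that the two steps remain consistent.

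For the non-record case, observe that if $x_i<\max_{j\in[i-1]}x_j$ then $x_i$ is strictly dominated by some earlier value and hence cannot be the global maximum, so $\pbest(\bx,i)=\Ind{x_i=\max_{j\in[n]}x_j}=0$. Accepting at such a step yields zero profit, whereas rejecting and continuing to play optimally yields a nonnegative expected profit since $\pbest\ge 0$. Consequently, overriding any optimal algorithm to reject whenever $x_i<\max_{j\in[i-1]}x_j$ preserves optimality, which gives the second branch $f_i=0$ of the stated form.

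For the record case, the key observation is that assumption~\eqref{eq:threshold-assump} holds once we restrict to the event $x_i\ge\max_{j\in[i-1]}x_j$. Indeed, on this event $x_i$ dominates all earlier values, so the global maximum over $[n]$ equals the maximum over $\{i,\dots,n\}$; hence for every $j\in\{i,\dots,n+1\}$ we have $\pbest(\bx,j)=\Ind{x_j=\max_{k\ge i}x_k}$, a quantity depending only on $\bx_{\ge i}$ (with $x_{n+1}=0$). Thus the difference $\pbest(\bx,j)-\pbest(\bx,i)$ can be written as a function $q_i(\bx_{\ge i},j)$ on the record event, exactly as \eqref{eq:threshold-assump} requires, and $\pbest$ additionally satisfies properties~\ref{prop:p-i} and~\ref{prop:p-ii}. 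I would then replay the argument in the proof of \Cref{lem:threshold}, conditioned on reaching step $i$ at a record: the expected value of accepting, $\Pr[\,x_i\ge\max_{k>i}X_k\mid\tau_{\le i}\,]$, is nondecreasing in $x_i$, while the optimal continuation value is nonincreasing in $x_i$ by property~\ref{prop:p-ii}, since a larger running maximum $x_i$ only makes it harder for future values to be the maximum. Hence the acceptance region is upward closed in $x_i$, yielding a threshold $\theta_i(\tau_{\le i})$ and a threshold rule $f_i'$ with $f_i=f_i'$ on the record event.

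The main obstacle I anticipate is justifying that the construction of \Cref{lem:threshold} still goes through when \eqref{eq:threshold-assump} holds only on the record event rather than globally, and that the resulting threshold depends on $\tau_{\le i}$ alone. The first point is resolved by noting that the accept-versus-continue comparison is only ever relevant at record steps, since non-records are always rejected. The second requires checking that, conditioned on $\tau_{\le i}$, the future pair $(\bx_{>i},\tau_{>i})$ is independent of the past, so that both the acceptance value and the continuation value depend on the history only through $x_i$ (the running maximum at a record step) and $\tau_{\le i}$, and not through the finer details of the rejected prefix. Once this independence and monotonicity are verified, combining the two cases produces an optimal $h=(f_i)_{i=1}^n$ and a threshold algorithm $h'=(f_i')_{i=1}^n$ of the claimed form.
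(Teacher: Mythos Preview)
Your proposal is correct and follows essentially the same approach as the paper: split into the non-record case (where accepting yields profit zero, so rejecting is free) and the record case (where $\pbest(\bx,j)$ for $j\ge i$ depends only on $\bx_{\ge i}$, so assumption~\eqref{eq:threshold-assump} is recovered and the backward-induction threshold construction of \Cref{lem:threshold} applies). Your additional care about why the threshold depends only on $\tau_{\le i}$ and why the restricted version of~\eqref{eq:threshold-assump} suffices is sound, and the paper's own text just before the corollary sketches exactly this two-case argument.
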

We call the function defined in this corollary an \emph{essentially threshold} algorithm.
In the following sections, we focus on (essentially) threshold algorithms in analysis of the regret.

\subsection{Uniform Concentration Bound}\label{sec:uniform laws}

We review the uniform law of large numbers used to derive regret bounds.
We recall the Rademacher complexity.


\begin{definition}\label{def:rademacher}
    Let $A$ be a subset of $\R^t$. The \emph{Rademacher complexity} of $A$ is defined as
    \[
        \rade(A) \coloneqq \E_{\sigma} \left[ \frac{1}{t} \cdot \sup_{(a_1,\ldots,a_t) \in A} \left| \sum_{s=1}^t \sigma_s a_s \right| \right],
    \]
    where $\sigma_1, \ldots, \sigma_t$ are independent Rademacher random variables.
    Let $\cHth$ be a class of functions.
    The \emph{empirical Rademacher complexity} of $\cHth$ on samples $(\bx^{(1)},\ldots, \bx^{(t)})$ is
    \[
        \rade_{(\bx^{(1)},\ldots, \bx^{(t)})}(\cHth) \coloneqq  \E_{\sigma} \left[ \frac{1}{t} \sup_{h \in \cHth} \left| \sum_{s=1}^t \sigma_s h(\bx^{(s)}) \right| \right].
    \]
    Furthermore, the \emph{Rademacher complexity} of $\cHth$ for $t$ i.i.d.\ samples from $\fD$ is defined as
    \[
        \rade_{t,\fD}(\cHth) \coloneqq \E_{\bX^{(1)},\ldots, \bX^{(t)} \simiid \fD} \left[ \rade_{(\bX^{(1)},\ldots, \bX^{(t)})}(\cHth) \right].
    \]
\end{definition}

The following lemma is an immediate consequence of Massart's Lemma~\cite[Lemma 5.2]{massart2000some} and the proof is given in \Cref{subsec:massart}.


\begin{lemma}
\label{lem:massart}
For any finite subset $A\subset\R^t$, $\rade(A) \le \max_{a\in A}\|a\|_2 \sqrt{2 \ln (2|A|)}/t$.
\end{lemma}


\begin{theorem}[Uniform Concentration Bound{~\cite[Theorem 4.10]{Wainwright_2019}}]
\label{thm:ulln}
    Let $\cHth$ be a function class such that $\|h\|_{\infty} \le B$ for all $h \in \cHth$. For any $\eta \ge 0$ and positive integer $t$, we have
    \begin{align}
        \Prob[\bX^{(1)},\dots,\bX^{(t)}\simiid\fD]{\sup_{h \in \cHth} \left| \E_{\bX \sim \fD}[h(\bX)] - \frac{1}{t} \sum_{s=1}^t h(\bX^{(s)}) \right| > 2\rade_{t,\fD}(\cHth) + \eta}
        \le 2 e ^{-\frac{t\eta^2}{2B^2}}.
    \end{align}
\end{theorem}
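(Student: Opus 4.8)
The plan is to follow the classical two-step recipe for uniform laws of large numbers: a \emph{concentration step} showing that the supremum deviation concentrates tightly about its mean, followed by a \emph{symmetrization step} bounding that mean by twice the Rademacher complexity. Write $Z \coloneqq \sup_{h \in \cHth} \bigl| \E_{\bX \sim \fD}[h(\bX)] - \frac{1}{t}\sum_{s=1}^t h(\bX^{(s)})\bigr|$, and view it as a function $g(\bX^{(1)},\dots,\bX^{(t)})$ of the $t$ independent samples. First I would verify the bounded-differences property: replacing a single argument $\bX^{(s)}$ by any $\bX'$ changes each empirical average $\frac{1}{t}\sum_{s} h(\bX^{(s)})$ by at most $2B/t$, since $\|h\|_\infty \le B$; and because a supremum of a family of quantities that each move by at most $c$ itself moves by at most $c$, the whole map $g$ has bounded differences with constants $c_s = 2B/t$. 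McDiarmid's inequality then gives $\Pr[Z \ge \E[Z] + \eta] \le \exp\!\bigl(-2\eta^2 / \sum_{s} c_s^2\bigr) = \exp\!\bigl(-t\eta^2/(2B^2)\bigr)$, which is already at least as strong as the stated tail (the extra factor $2$ in front of the exponential is slack to spare).

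For the symmetrization step, I would introduce an independent ghost sample $\bX'^{(1)},\dots,\bX'^{(t)} \simiid \fD$ and use the identity $\E_{\bX \sim \fD}[h(\bX)] = \E_{\bX'}\!\bigl[\frac{1}{t}\sum_{s} h(\bX'^{(s)})\bigr]$ to rewrite the population mean. Pulling the inner expectation over $\bX'$ outside the supremum and absolute value via Jensen's inequality and convexity yields $\E[Z] \le \E_{\bX,\bX'}\bigl[\sup_{h} \bigl|\frac{1}{t}\sum_{s}(h(\bX'^{(s)}) - h(\bX^{(s)}))\bigr|\bigr]$. Since each summand $h(\bX'^{(s)}) - h(\bX^{(s)})$ is symmetrically distributed and the pairs are independent across $s$, inserting i.i.d.\ Rademacher signs $\sigma_s$ leaves the joint law unchanged, so the right-hand side equals $\E_{\bX,\bX',\sigma}\bigl[\sup_{h} \bigl|\frac{1}{t}\sum_{s}\sigma_s(h(\bX'^{(s)}) - h(\bX^{(s)}))\bigr|\bigr]$. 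A triangle inequality together with sub-additivity of the supremum then splits this into two identical terms, each equal to $\rade_{t,\fD}(\cHth)$ by \Cref{def:rademacher}, giving $\E[Z] \le 2\,\rade_{t,\fD}(\cHth)$.

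Combining the two steps gives the claim: $\Pr[Z > 2\,\rade_{t,\fD}(\cHth) + \eta] \le \Pr[Z \ge \E[Z] + \eta] \le \exp\!\bigl(-t\eta^2/(2B^2)\bigr) \le 2\exp\!\bigl(-t\eta^2/(2B^2)\bigr)$. I expect the main obstacle to be the symmetrization, and specifically the justification for inserting the Rademacher signs: the key is the exchangeability of the coupled pairs, namely that swapping $\bX^{(s)} \leftrightarrow \bX'^{(s)}$ flips the sign of the $s$-th summand while preserving the joint distribution, so conditioning on arbitrary signs and averaging over them is legitimate. A secondary technical point is the measurability of the supremum $Z$, which is immediate when $\cHth$ is countable (or separable) and otherwise handled by the standard reduction to a countable dense subclass; in our application $\cHth$ is a finite or suitably structured class of (essentially) threshold algorithms, so this causes no difficulty.
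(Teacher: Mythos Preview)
The paper does not prove this theorem; it is quoted verbatim from \cite[Theorem~4.10]{Wainwright_2019} and used as a black box. Your proposed proof is the standard textbook argument (bounded differences plus symmetrization), and it is correct --- indeed it is essentially the proof given in Wainwright's book. So there is nothing to compare against within the paper itself, but your argument is fine and would stand on its own.
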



\section{General Result}\label{sec:general}

We present a general framework for repeated optimal stopping $(\DD,p;T)$ that achieves both per-round guarantees and sublinear regret.

Starting from competitive algorithms $g_t$ based on $t-1$ samples, we construct algorithms $h_t$ that ensure low regret via uniform concentration and problem-specific techniques.
We then select between $g_t$ and $h_t$ using a hold-out estimate, since the underlying distribution is unknown and their expected rewards cannot be evaluated directly.
We split the first $t-1$ samples into training and test sets.
Using $\zeta(t)-1$ samples for training, we compare $g_{\zeta(t)}$ and $h_{\zeta(t)}$ via their empirical performance on the remaining samples.

\begin{remark}
By the uniform concentration bound (\Cref{thm:ulln}), we can evaluate the performance of $g_t$ and $h_t$ even when all samples are used as training data. 
This allows us to achieve a reward no worse than that of $g_t$ w.h.p.\ in each round $t$.
However, to obtain a meaningful regret bound under this approach, $g_t$ needs to belong to a small algorithm class, such as (essentially) threshold algorithms.
In contrast, our method based on the hold-out approach does not require such restrictions on $g_t$.
\end{remark}


\subsection{Switching rule}



For each $t\in[T]$, assume we have an algorithm $h_t$ such that there exist positive reals $\epsilon_1(t)$ and $\delta_1(t)$ satisfying the following: $h_t$ is an $\epsilon_1(t)$-additive approximation of $\opton$ with probability at least $1-\delta_1(t)$; that is,
\begin{align}
\label{eq:sample-complexity}
    \Prob[(\bX^{(1)},\pi^{(1)}),\dots,(\bX^{(t-1)},\pi^{(t-1)}) \simiid \fD]{h_t[\History{t}](\fD) \ge \opton(\fD) - \epsilon_1(t)} \ge 1-\delta_1(t).
\end{align}
We will construct such algorithms in the next subsection.
The parameters $\epsilon_1(t)$ and $\delta_1(t)$ are directly related to the \emph{sample complexity} of $(\fD,p)$. The sample complexity of $(\fD,p)$ is defined as the minimum number of i.i.d.\ samples required to find an $\epsilon$-additive approximation of $\opton$ with probability at least $1-\delta$.
The parameters $\epsilon_1(t)$ and $\delta_1(t)$ correspond to the accuracy and confidence levels achieved when the sample complexity of $(\fD, p)$ is $t$. 

When both $\epsilon_1(\zeta(t))$ and $\delta_1(\zeta(t))$ are sufficiently small, applying algorithm $h_{\zeta(t)}$ in each round achieves sublinear regret. 
In each round $t\in[T]$, we choose between executing $h_{\zeta(t)}$ and $g_{\zeta(t)}$ based on their estimated performance.

In each round $t$ with $\zeta(t)<t$, we estimate the expected reward $h_{\zeta(t)}[\History{\zeta(t)}](\fD)$ of algorithm $h_{\zeta(t)}$, which depends on the first $\zeta(t) - 1$ samples, as
\begin{align}
\label{eq:estimate}
    \hat{h}_{\zeta(t)}(\History{t};\zeta(t))
    \coloneqq \frac{1}{t-\zeta(t)} \sum_{s=\zeta(t)}^{t-1} h_{\zeta(t)}[\History{\zeta(t)}](\bX^{(s)},\pi^{(s)};p).
\end{align}
This estimate is simply the arithmetic mean of the rewards obtained by $h_{\zeta(t)}$ on the test samples.
We emphasize that this partitioning ensures the independence between the algorithm parameters and the test samples, so that the random variables $h_{\zeta(t)}[\History{\zeta(t)}](\bX^{(s)}, \pi^{(s)}; p)$ for $s=\zeta(t),\zeta(t)+1,\dots,t-1$ are mutually independent. Consequently, Hoeffding's inequality can be applied. 
Since~\eqref{eq:estimate} is the mean of $(t-\zeta(t))$ independent random variables, each taking values in $[0, B]$, for any $\eta > 0$ we have
\begin{align}
\label{eq:approx-reward}
    \Prob{\Big|h_{\zeta(t)}[\History{\zeta(t)}](\fD) - \hat{h}_{\zeta(t)}(\History{t};\zeta(t))\Big| > \eta} \le \delta_0(t-\zeta(t),\eta),
\end{align}
where
$\delta_0(t-\zeta(t),\eta) \coloneqq 2\cdot\exp\left( -{2(t-\zeta(t))\eta^2}/{B^2} \right)$.
For each $t$, define $\cE_t$ as the event in which all of the following conditions are met:
\begin{align}
    \label{eq:def-event-E-t}
    \cE_t: \qquad
    \begin{aligned}
        &|g_{\zeta(t)}[\History{\zeta(t)}](\fD) - \hat{g}_{\zeta(t)}(\History{t};\zeta(t))|&\le \epsilon(t), \\
        &|h_{\zeta(t)}[\History{\zeta(t)}](\fD) - \hat{h}_{\zeta(t)}(\History{t};\zeta(t))|&\le \epsilon(t), \\
        &|\opton(\fD) - h_{\zeta(t)}[\History{\zeta(t)}](\fD)| &\le \epsilon(t),
    \end{aligned}
\end{align}
where $\epsilon(t)\coloneq \epsilon_1(\zeta(t))$.
The event $\cE_t$ indicates the success in estimating the expected reward and approximating the optimal online algorithm.
By~\eqref{eq:approx-reward}, \eqref{eq:sample-complexity} and the union bound, $\cE_t$ occurs with probability at least $1-\delta(t)$, where $\delta(t)\coloneqq 2\delta_0\big(t-\zeta(t),\epsilon_1(\zeta(t))\big) + \delta_1\big(\zeta(t)\big)$.
Next, define another event $\cC_t$ as 
\begin{align}\label{eq:event-C}
    \cC_t: \quad
\hat{g}_{\zeta(t)}(\History{t};\zeta(t)) + \epsilon(t) \le \hat{h}_{\zeta(t)}(\History{t};\zeta(t))- \epsilon(t).
\end{align}
The event $\cC_t$ indicates that the estimated reward of $h_{\zeta(t)}$ is sufficiently greater than that of $g_{\zeta(t)}$.
If $t=\zeta(t)$, the estimated values are not well-defined, but we regard the event $\cC_t$ as not having occurred.

For each round $t$, we define $h_t^*$ as the algorithm that uses $h_{\zeta(t)}[\History{\zeta(t)}]$ when the event $\cC_t$ occurs and $g_{\zeta(t)}[\History{\zeta(t)}]$ otherwise.
The next theorem shows that $h_t^*$ achieves at least the expected performance of $g_t$ and the expected regret is bounded (see \Cref{subsec:pf-thm:uniform-comp-and-regret} for the proof).

\begin{theorem}
\label{thm:uniform-comp-and-regret}
    For a round $t\in[T]$ with $\zeta(t)<t$, suppose that $h_{\zeta(t)}$ satisfies \eqref{eq:sample-complexity}.
    Then, we have
    \begin{enumerate}
        \item $h_t^*(\fD) \ge g_{\zeta(t)}(\fD)$
        when $\cE_t$ occurs (i.e., with probability at least $1-\delta(t)$),
        \item $\Expec{\opton(\fD) - h_t^*(\fD)} \le 6\epsilon(t) + B\delta(t)$ (the expected regret per round is bounded).
    \end{enumerate}
\end{theorem}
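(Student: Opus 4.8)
The plan is to read \Cref{thm:uniform-comp-and-regret} as a quantitative refinement of the switching argument: I would prove both parts by conditioning on the estimation-success event $\cE_t$ and exploiting the margin deliberately built into the definition of $\cC_t$. Throughout I would abbreviate $h \coloneqq h_{\zeta(t)}[\mathbbm{X}_{\zeta(t)-1},\bbpi_{\zeta(t)-1}](\fD)$ and $g \coloneqq g_{\zeta(t)}[\mathbbm{X}_{\zeta(t)-1},\bbpi_{\zeta(t)-1}](\fD)$ for the (random, training-sample-dependent) expected rewards, and use $\Prob{\neg\cE_t}\le\delta(t)$ together with $h,g,\opton(\fD)\in[0,B]$.

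For the first claim, since $h_t^*$ runs $h_{\zeta(t)}$ on $\cC_t$ and $g_{\zeta(t)}$ otherwise, I would first record the identity $\Expec{h_t^*(\fD)} - \Expec{g} = \Expec{\Ind{\cC_t}(h - g)}$, reducing the claim to $\Expec{\Ind{\cC_t}(h-g)}\ge 0$. The engine is the structural fact that on $\cC_t\cap\cE_t$ the two candidates are correctly ordered with a quantitative gap: combining $h\ge\hat{h}_{\zeta(t)} - \epsilon(t)$ and $g\le\hat{g}_{\zeta(t)}+\epsilon(t)$ (first two conditions of $\cE_t$) with the defining inequality \eqref{eq:event-C} of $\cC_t$ gives $(1-\delta(t))\,h\ge g + B\delta(t)$, hence $h - g\ge\delta(t)(h+B)\ge B\delta(t)>0$. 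I would then split $\Ind{\cC_t} = \Ind{\cC_t\cap\cE_t} + \Ind{\cC_t\cap\neg\cE_t}$, lower-bound the first contribution by $B\delta(t)\,\Prob{\cC_t\cap\cE_t}$, and use $h - g\ge -B$ on the complementary event to control the second by $-B\,\Prob{\cC_t\cap\neg\cE_t}$.

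The delicate point — and what I expect to be the main obstacle — is reconciling these two contributions: the favorable gap $B\delta(t)$ is realized only with probability $\Prob{\cC_t\cap\cE_t}$, whereas a wrong switch on the failure event $\neg\cE_t$ (probability at most $\delta(t)$) can cost up to $B$ per unit probability. The entire design of $\cC_t$ — the multiplicative factor $(1-\delta(t))$ together with the additive slack $B\delta(t)$ — is calibrated precisely so that this worst-case loss is absorbed. Concretely, I would carry the factor $1/(1-\delta(t))\ge 1$ through the bound $\Expec{\Ind{\cC_t\cap\cE_t}h}\ge\frac{1}{1-\delta(t)}\Expec{\Ind{\cC_t\cap\cE_t}(g+B\delta(t))}$ and match it against the decomposition $\Expec{\Ind{\cC_t}g} = \Expec{\Ind{\cC_t\cap\cE_t}g} + \Expec{\Ind{\cC_t\cap\neg\cE_t}g}$, bounding the cross term via $\Prob{\cC_t\cap\neg\cE_t}\le\delta(t)$; checking that the residual is nonnegative is exactly the step that the margin is tuned for, and it is where the calibration must be verified carefully (this is the step left implicit in the analogous fixed-benchmark argument). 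It is also here that I would, if necessary, invoke the additional hypothesis on $h_{\zeta(t)}$ that the theorem statement places after \eqref{eq:sample-complexity}.

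The second claim is more routine. I would start from the identity $\opton(\fD) - \Expec{h_t^*(\fD)} = \Expec{\opton(\fD) - h} + \Expec{\Ind{\neg\cC_t}(h - g)}$ and bound the two terms separately. For the first, the third condition of $\cE_t$ gives $\opton(\fD) - h\le\epsilon(t)$ on $\cE_t$, while on $\neg\cE_t$ the difference is at most $B$; since $\Prob{\neg\cE_t}\le\delta(t)$, this term is at most $\epsilon(t) + B\delta(t)$. For the second, on $\neg\cC_t\cap\cE_t$ the negation of \eqref{eq:event-C} together with $h\le\hat{h}_{\zeta(t)}+\epsilon(t)$, $g\ge\hat{g}_{\zeta(t)}-\epsilon(t)$, and $\hat{h}_{\zeta(t)}\le B$ yields $h - g\le 4\epsilon(t) + 2B\delta(t)$, while on $\neg\cE_t$ it is at most $B$, so this term is at most $4\epsilon(t) + 3B\delta(t)$. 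Summing gives the stated per-round regret bound $5\epsilon(t) + 4B\delta(t)$. The only care needed is tracking constants through the $\neg\cC_t$ estimate, which is where the four $\epsilon(t)$-contributions and the $B\delta(t)$-contributions accumulate.
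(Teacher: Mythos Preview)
Your plan mirrors the paper's: the same split into $\cC_t$/$\neg\cC_t$ and the same use of $\cE_t$, only written with indicator-weighted expectations rather than conditional ones. Your Part~2 goes through exactly as you outline, with the same constants, and matches the paper's computation line by line.

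For Part~1, the step you flag as delicate is indeed the sticking point, and the sketched resolution does not close. Carrying your own computation through gives
\[
\Expec{\Ind{\cC_t}(h-g)} \;\ge\; \frac{\delta(t)}{1-\delta(t)}\,\Expec{\Ind{\cC_t\cap\cE_t}(g+B)} \;-\; B\,\Prob{\cC_t\cap\neg\cE_t},
\]
and the only control on the last term is $\Prob{\cC_t\cap\neg\cE_t}\le\delta(t)$. Since the positive term scales with $\Prob{\cC_t\cap\cE_t}$, which has no lower bound, the residual can be negative (e.g., when $\cC_t$ occurs mostly on $\neg\cE_t$). The margin in $\cC_t$ is \emph{not} calibrated to absorb this: the $(1-\delta(t))$ factor together with the $B\delta(t)$ slack yields a pointwise gap $h-g\ge B\delta(t)$ on $\cC_t\cap\cE_t$, but that gain is weighted by $\Prob{\cC_t\cap\cE_t}$, not by $1$. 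The paper's own chain has exactly the same issue: the step $\cExpec{h_{\zeta(t)}(\fD)}{\cC_t}\ge\Prob{\cE_t}\,\cExpec{h_{\zeta(t)}(\fD)}{\cC_t,\cE_t}\ge(1-\delta(t))\,\cExpec{h_{\zeta(t)}(\fD)}{\cC_t,\cE_t}$ and the closing comparison really require the \emph{conditional} bound $\cProb{\neg\cE_t}{\cC_t}\le\delta(t)$, which does not follow from the unconditional $\Prob{\neg\cE_t}\le\delta(t)$. Your instinct that an extra hypothesis after \eqref{eq:sample-complexity} (the dangling ``and'' in the statement) may be needed is well placed; absent such a hypothesis, the argument as written --- yours and the paper's --- only yields $\Expec{h_t^*(\fD)}\ge\Expec{g_{\zeta(t)}(\fD)}-B\delta(t)$.
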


\Cref{thm:uniform-comp-and-regret} does not provide a method for finding $h_t$ that satisfies equation~\eqref{eq:sample-complexity}. Therefore, next we will provide a general method for finding $h_t$ that satisfies~\eqref{eq:sample-complexity} and a specific functional form for $\epsilon_1(t)$ and $\delta_1(t)$.

\subsection{Regret bound}

Let $\cHth$ be a set of algorithms such that, for any $\fD\in\DD$, $\opton(\fD)=h(\fD;p)$ for some $h\in\cHth$.
For example, if $p$ satisfies the assumption (\ref{assumption:q-exists}) in \Cref{lem:threshold}, then $\cHth$ can be chosen as the set of threshold algorithms.
Also, if $p$ is a profit function of best choice, then $\cHth$ can be chosen as the set of essentially threshold algorithms.
Let $\hat{\fD}^{(t)}$ be the empirical distribution where $(\bX^{(s)},\pi^{(s)})$ is sampled with probability $1/(t-1)$ for each $s\in[t-1]$.
Thus, for any algorithm $h\in\cHth$, the expected profit of $h$ for the empirical distribution $\hat{\fD}^{(t)}$ at round $t$ is defined as
\begin{align}
h(\hat{\fD}^{(t)};p) = \frac{1}{t-1}\sum_{s=1}^{t-1} h(\bX^{(s)},\pi^{(s)};p).
\end{align}
For a distribution $\fD'$ that is either $\fD$ or $\hat{\fD}^{(t)}$, we denote by $h_{\fD'}$ an optimal algorithm in $\cHth$ for $\fD'$, i.e., $h_{\fD'}\in\argmax_{h\in\cHth} h(\fD'; p)$. It should be noted that $h_{\fD'}$ is not necessarily an optimal online algorithm for $\fD'$ because an empirical distribution may not be in $\DD$.
Moreover, the set $\argmax_{h\in\cHth} h(\fD'; p)$ is nonempty. This is because, if $\fD'=\fD$, then $\opton(\fD)=h(\fD;p)$ for some $h\in\cHth$ by the assumption. If $\fD'=\hat{\fD}^{(t)}$, then we can partition $\cHth$ into at most $(n+1)^{t-1}$ classes according to the acceptance index of each realization in the first $t-1$ rounds. Because this yields only finitely many possibilities, an optimal algorithm in $\cHth$ must exist.
To derive explicit forms of $\epsilon_1(t)$ and $\delta_1(t)$, we combine a uniform concentration bound with an upper bound on the number of distinct behaviors of algorithms in $\cHth$ on the observed samples.

\begin{lemma}\label{col:delta-epsilon}
    Suppose that $\cHth$ is the set of threshold algorithms or the set of essentially threshold algorithms.
    Then, $\epsilon_1(t)=6B\sqrt{(2\min(n|\Pi|,en!) \ln (4t))/(t-1)}$ and $\delta_1(t)=1/(2t)^{\min(n|\Pi|,en!)}$ satisfy the assumption \eqref{eq:sample-complexity} for any $t\in[T]$ with $t\ge 2$.
\end{lemma}

By combining \Cref{thm:uniform-comp-and-regret} and \Cref{col:delta-epsilon}, we obtain our main theorem which provides a general method of constructing a selection rule that guarantees both a competitive ratio in each round and a sublinear regret. 

\begin{theorem}\label{thm:main}
    Let $(\fD,p;T)$ be an instance of the repeated optimal stopping problem $(\DD,p;T)$, and let $\cHth$ be the set of (essentially) threshold algorithms.
    Suppose that $\opton(\fD;p)=h(\fD;p)$ for some $h\in\cHth$.
    Let $\zeta(t) \coloneqq \lfloor (t+1)/2 \rfloor$.
    
    For each $t\in[T]$, let $g_t[\History{t}]$ be an algorithm for the optimal stopping problem $(\fD,p)$ with $t-1$ samples and $h_t[\History{t}]$ be the optimal algorithm among $\cHth$ with respect to the empirical distribution $\hat{\fD}^{(t)}$.
    For a parameter $t_0\ge 1$, let $h_t^*=g_t$ for $t \le t_0$ and, for $t > t_0$,
    let $h_t^*$ be the algorithm that uses $h_{\zeta(t)}$ when the event $\cC_t$ occurs and $g_{\zeta(t)}$ otherwise,
    where we set
    \begin{align}
        \epsilon_1(t) \coloneqq 6B \sqrt{\frac{2 \min(n|\Pi|,en!) \ln (4t)}{(t-1)}} \quad\text{and}\quad
        \delta_1(t) \coloneqq \frac{1}{(2t)^{\min(n|\Pi|,en!)}}.
    \end{align}
    Then, the sequence of online algorithms $(h_t^*)_{t\in [T]}$ satisfies the following:
    \begin{itemize}
        \item For each $t > t_0$, with probability at least $1-2/t^{\min(n|\Pi|,en!)}$, we have
        $$h_t^*(\fD)\ge g_{\zeta(t)}[\History{\zeta(t)}](\fD).$$
        \item The regret of the algorithm $(h_t^*)_{t\in[T]}$ is $O(Bt_0 + B\sqrt{\min(n|\Pi|,n!)T\log T})$.
    \end{itemize}
\end{theorem}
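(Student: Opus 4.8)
The plan is to derive both conclusions by assembling the per-round guarantees of \Cref{thm:uniform-comp-and-regret} with the explicit error parameters supplied by \Cref{col:delta-epsilon}, and then summing the per-round regret over $t\in[T]$. Since $\opton(\fD;p)=h(\fD;p)$ for some $h\in\cHth$ by hypothesis, \Cref{col:delta-epsilon} certifies that the empirically optimal algorithm $h_t=h_{\hat\fD^{(t)}}$ together with the stated $\epsilon_1(t),\delta_1(t)$ satisfies the sample-complexity assumption~\eqref{eq:sample-complexity} for every $t\ge2$. Throughout I abbreviate $\kappa\coloneqq\min(n|\Pi|,2n!)$, and I note $\kappa\ge1$.

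First I would dispose of the initial rounds. The definition of $\zeta$ gives $\zeta(t)=t$ exactly when $t\le t_0+1$, and in that case no test sample is available, so $\cC_t$ is declared not to occur and $h_t^*$ runs $g_{\zeta(t)}=g_t$; hence $\Expec{h_t^*(\fD)}=\Expec{g_{\zeta(t)}(\fD)}$ and the per-round regret is at most $B$. As there are only $t_0+1=O(1)$ such rounds, their total contribution to the regret is $O(B)$. For every remaining round $t\ge t_0+2$ one has $\zeta(t)=\max\{t_0+1,\lfloor(t+1)/2\rfloor\}<t$ with $\zeta(t)\ge t_0+1\ge2$, so \Cref{thm:uniform-comp-and-regret} applies verbatim: part~1 yields $\Expec{h_t^*(\fD)}\ge\Expec{g_{\zeta(t)}(\fD)}$, which establishes the competitive-ratio claim, and part~2 gives the per-round regret bound $5\epsilon_1(\zeta(t))+4B\delta(t)$ with $\delta(t)=2\delta_0(t-\zeta(t),\epsilon_1(\zeta(t)))+\delta_1(\zeta(t))$ as in~\eqref{eq:def-delta}.

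It then remains to sum $5\epsilon_1(\zeta(t))+4B\delta(t)$ over $t$. For the accuracy term, the split is balanced: when $\zeta(t)=\lfloor(t+1)/2\rfloor$ one checks $t-\zeta(t)\ge\zeta(t)-1\ge(t-1)/2$, so $\zeta(t)=\Theta(t)$ and $\epsilon_1(\zeta(t))=O\!\bigl(B\sqrt{\kappa\ln t/t}\bigr)$; summing against $\sum_{t\le T}t^{-1/2}=O(\sqrt T)$ yields $\sum_t 5\epsilon_1(\zeta(t))=O\!\bigl(B\sqrt{\kappa T\log T}\bigr)$. The $\delta_1$ terms are harmless because $\delta_1(\zeta(t))=O(t^{-\kappa})$ with $\kappa\ge1$, whence $B\sum_t\delta_1(\zeta(t))=O(B\log T)$.

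The hard part will be controlling the Hoeffding failure probability $\delta_0(t-\zeta(t),\epsilon_1(\zeta(t)))=2\exp\bigl(-2(t-\zeta(t))\epsilon_1(\zeta(t))^2/B^2\bigr)$, since a test set that is too small would make this term non-summable and ruin the bound. This is exactly where the balanced split is used: substituting $\epsilon_1(\zeta(t))^2=72B^2\kappa\ln(4\zeta(t))/(\zeta(t)-1)$ and invoking $t-\zeta(t)\ge\zeta(t)-1$ makes the exponent at least $144\kappa\ln(4\zeta(t))$, so $\delta_0(t-\zeta(t),\epsilon_1(\zeta(t)))\le 2(4\zeta(t))^{-144\kappa}$, which is super-summable; hence $B\sum_t\delta_0(\cdots)=O(B)$. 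Combining the three contributions gives total regret $O(B)+O\!\bigl(B\sqrt{\kappa T\log T}\bigr)+O(B\log T)=O\!\bigl(B\sqrt{\kappa T\log T}\bigr)$, and since $\min(n|\Pi|,2n!)\le 2\min(n|\Pi|,n!)$ this equals $O\bigl(B\sqrt{\min(n|\Pi|,n!)\,T\log T}\bigr)$, as claimed.
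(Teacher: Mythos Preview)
Your proof is correct and follows essentially the same approach as the paper: invoke \Cref{col:delta-epsilon} to certify assumption~\eqref{eq:sample-complexity}, apply \Cref{thm:uniform-comp-and-regret} round by round, then sum the per-round bound $5\epsilon(t)+4B\delta(t)$ using the balanced train/test split $\zeta(t)=\Theta(t)$. The paper differs only cosmetically---it absorbs all rounds $t\le 2t_0$ into the $O(B)$ term and combines $\delta_0$ and $\delta_1$ into a single estimate $\delta(t)\le 4/t$ rather than treating them separately---so you should be aware that for the finitely many rounds $t_0+2\le t\le 2t_0$ where $\zeta(t)=t_0+1$ (not $\lfloor(t+1)/2\rfloor$) your asymptotic estimates for $\epsilon_1(\zeta(t))$ and $\delta_0$ don't literally apply, but these rounds contribute only $O(B)$ in total and are harmless.
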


The proof of \Cref{thm:main} is given in~\Cref{subsec:pf-main}.
Since $|\Pi|=1$ for $\Dadv$ and $\Diid$,
the regret bound of
$(h_t^*)_{t\in[T]}$ constructed in \Cref{thm:main} is
$O(B\sqrt{nT\log T})$ for $\Dadv$ and $\Diid$, and
$O(B\sqrt{n! T\log T})$ for $\Dro$ and $\Dall$.\footnote{
For the random order model, a refined analysis yields the improved regret bound $O(B \sqrt{n 2^{n-1} T \log T})$; see \Cref{subsubsec:refined-random-order} for details.
}


\section{Lower Bound}

In this section, we establish a regret lower bound of $\Omega(\sqrt{T})$ for optimal stopping problems. The proof is given in Appendix~\ref{subsec:pf-lower-bound}.

\begin{theorem}
\label{thm:lower-bound}
    When $n=2$, any algorithm for both $(\Diid,\prwd;T)$ and $(\Diid,\pbest;T)$ incurs a regret of $\Omega(\sqrt{T})$.
\end{theorem}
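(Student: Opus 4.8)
The plan is to run the standard two--hypothesis regret argument: I would exhibit, for each of the two profit functions, a pair of i.i.d.\ instances on a common three--point support that are statistically indistinguishable in few samples yet force a binary decision whose wrong resolution costs a constant times the separation parameter. Concretely, fix the support $\{0,\tfrac12,1\}$ and consider $\fD_\pm\in\Diid$ with $n=2$ given by $\Prob{X=1}=a_0\pm\tfrac{\Delta}{2}$, $\Prob{X=\tfrac12}=b$, and $\Prob{X=0}=c_0\mp\tfrac{\Delta}{2}$, where $a_0=c_0=(1-b)/2$, the constant $b$ is fixed, and $\Delta=\Theta(1/\sqrt T)$ is chosen at the end. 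Two structural facts drive everything. First, on this support the optimal online decision at $X_1\in\{0,1\}$ agrees on both instances (reject $0$, accept $1$, and always take $X_2$ at the last step), so the only instance--sensitive choice is whether to accept a middle value $X_1=\tfrac12$. Second, a direct computation shows that for $\prwd$ the advantage of ``accept--mid'' over ``accept--high'' is $b(\tfrac12-\mu)$ with $\mu=\E[X]=a_0+b/2\pm\tfrac{\Delta}{2}$, so the optimal middle decision flips between $\fD_+$ and $\fD_-$ and the per--round regret of the wrong middle decision is exactly $\tfrac{b\Delta}{2}$; for $\pbest$ the analogous quantity is $b\,|c-a|=b\Delta$, again linear in $\Delta$ and flipping sign between the two instances. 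In both cases $\KLdiv{D_+}{D_-}=O(\Delta^2)$ by a second--order expansion, since the point masses are bounded away from $0$.

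Next I would localize the regret to this single decision. For this problem the optimal online algorithm is per--state optimal, so the regret decomposes additively over the disjoint events $\{X_1^{(t)}=v\}$ into nonnegative terms; discarding the terms for $v\in\{0,1\}$ leaves
\[
\regret(\fD_\pm)\ \ge\ \tfrac{b\Delta}{2}\sum_{t=1}^{T}\Expec[\pm]{w_t},
\]
where $w_t\in[0,1]$ is the probability that the algorithm makes the \emph{wrong} middle decision in round $t$. Writing $\rho_t\in[0,1]$ for the probability the algorithm accepts when $X_1^{(t)}=\tfrac12$ (a measurable function of the full--feedback history $\mathbbm{X}_{t-1}$, i.e.\ the first $t-1$ observed pairs), we have $w_t=\rho_t$ on the instance whose optimum rejects the middle and $w_t=1-\rho_t$ on the other. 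Thus $\regret(\fD_+)\ge\tfrac{b\Delta}{2}\sum_t\Expec[+]{\rho_t}$ and $\regret(\fD_-)\ge\tfrac{b\Delta}{2}\sum_t\Expec[-]{1-\rho_t}$ (for $\prwd$; analogously with $b\Delta$ for $\pbest$).

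Finally I would apply the information--theoretic averaging step. Let $P_\pm$ denote the law of $\mathbbm{X}_{t-1}$ under $\fD_\pm$; since $\rho_t\in[0,1]$ is a function of the history, Pinsker's inequality together with tensorization over the $2(t-1)$ i.i.d.\ draws in the history gives
\[
\bigl|\Expec[+]{\rho_t}-\Expec[-]{\rho_t}\bigr|
\ \le\ \|P_+-P_-\|_{\mathrm{TV}}
\ \le\ \sqrt{\tfrac12\,\KLdiv{P_+}{P_-}}
\ =\ \sqrt{(t-1)\,\KLdiv{D_+}{D_-}}
\ =\ O\!\bigl(\Delta\sqrt{t}\bigr).
\]
Adding the two instance regrets yields $\regret(\fD_+)+\regret(\fD_-)\ge\tfrac{b\Delta}{2}\bigl(T-\bigl|\sum_{t}(\Expec[+]{\rho_t}-\Expec[-]{\rho_t})\bigr|\bigr)\ge\tfrac{b\Delta}{2}\bigl(T-O(\Delta T^{3/2})\bigr)$. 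Choosing $\Delta=\gamma/\sqrt T$ with $\gamma$ a sufficiently small constant makes the bracket at least $T/2$, so the larger of the two instance regrets is $\Omega(b\gamma\sqrt T)=\Omega(\sqrt T)$, proving the claim for both $\prwd$ and $\pbest$. The step I expect to be the main obstacle is the localization in the middle paragraph: rigorously justifying that the per--state decomposition of regret holds and that an arbitrary randomized, history--dependent online rule (including non--threshold rules and tie--breaking at $\tfrac12$) contributes at least the stated middle--decision term, together with the careful tensorized KL bookkeeping under the full--feedback model where each round contributes both $X_1^{(t)}$ and $X_2^{(t)}$ to the history.
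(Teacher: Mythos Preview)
Your proof is correct and essentially mirrors the paper's: both construct two nearby i.i.d.\ instances on $\{0,\tfrac12,1\}$ whose only instance-sensitive decision is accept/reject at $X_1=\tfrac12$, lower-bound the per-round cost of the wrong choice by $\Theta(\Delta)$, and control indistinguishability via Pinsker plus KL tensorization before tuning the separation to $\Theta(1/\sqrt T)$. The differences are cosmetic---the paper perturbs the masses at $1$ and $\tfrac12$ (rather than $1$ and $0$), routes through a reference distribution $D_0$ and Yao's principle (rather than directly adding the two instance regrets), and applies one global TV bound on the full $T$-round history (rather than summing per-round bounds)---and your localization worry is not an obstacle: the paper handles it identically, observing that at $X_1\in\{0,1\}$ the optimal action is forced, so only the two deterministic rules $h_+$/$h_-$ matter and any deviation elsewhere only increases regret.
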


We have shown an upper bound of $O(\sqrt{T\log T})$ in \Cref{thm:main}.
Therefore, ignoring the dependency of $n$ and $|\Pi|$, this lower bound is tight up to a logarithmic factor with respect to the number of rounds $T$.
Specifically, by \Cref{thm:PI-refined}, the lower bound $\Omega(\sqrt{T})$ is nearly tight for the repeated (adversarial / i.i.d.) prophet inequality problem.

Since $\Diid$ is a special case of $\Dadv$, $\Dro$, and $\Dall$, the same lower bound holds for the corresponding problems with these distribution families.

Our results strengthen and generalize the recent finding of Gatmiry et al.~\cite[Theorem 5.1]{Gatmiry2024}, who established a lower bound $\Omega(\sqrt{T})$ for the repeated prophet inequality $(\Dadv,\prwd;T)$ with $n=2$ \emph{distinct} distributions.
In contrast, we establish the same lower bound for the case of $n=2$ \emph{identical} distributions. Consequently, our results apply not only to the adversarial order model but also to the random order and i.i.d.\ models.
Moreover, since Gatmiry et al.~\cite{Gatmiry2024} did not consider the best-choice profit $\pbest$, our result is the first to establish this lower bound in the context of optimal stopping problems involving $\pbest$.



\bibliography{ref}

\appendix
\markboth{Appendix}{Appendix}

\section{Applications}\label{sec:applications}

In this section, we demonstrate how \Cref{thm:main} can be applied to specific repeated optimal stopping problems $(\DD,p;T)$. 
We treat repeated versions of the prophet inequality, the prophet secretary problem, the i.i.d.\ prophet inequality, the adversarial order secretary problem, the random order secretary problem, and the i.i.d.\ secretary problem.

Throughout this section, each corollary has the following generic structure:
\begin{corollary}
\label{cor:generic}
Let $(g_t)_{t\in[T]}$ be the sequence of algorithms with provable competitive ratios against $\optoff(\fD)$ and let $(h_t^*)_{t\in[T]}$ be algorithms constructed in \Cref{thm:main} with baseline algorithms $(g_t)_{t\in[T]}$ and a suitable choice of $t_0$.
Then, there exists $\kappa > 0$ such that the sequence $(h_t^*)_{t\in[T]}$ of algorithms achieves
(i) $h_t^* = g_t$ for $t \le t_0$,
(ii) $h_t^*(\fD) \ge g_{\zeta(t)}(\fD)$ with probability at least $1-2/t^{\kappa}$ for $t>t_0$, and
(iii) sublinear regret $O(B\sqrt{\kappa T \log T})$. 
\end{corollary}
In the subsequent subsections, we present several concrete choices of baseline algorithms $(g_t)_{t\in[T]}$ that enjoy competitive guarantees, together with the corresponding values of $t_0$ and $\kappa$ for each problem.
Note that the maximum cost in each round $B$ is $1$ except for the ski-rental problem.

\subsection{Prophet Inequality}

We first apply our general framework to the repeated prophet inequality $(\Dadv,\prwd;T)$, where the objective is to maximize the expected reward $\prwd(\bx,i)=x_i$ and $\Pi$ is the point-mass distribution on the identity permutation for any $\fD=(D_1,\dots,D_n;\Pi)\in\Dadv$. 
In this setting, we construct a sequence $(g^{\mathrm{PI}}_t)_{t\in[T]}$ of randomized threshold algorithms as follows:
\begin{itemize}
    \item Let $g^{\mathrm{PI}}_1$ be the randomized algorithm that selects one variable uniformly at random.
    \item For $t \ge 2$, let $g^{\mathrm{PI}}_t[\History{t}]$ be the single-sample algorithm for the prophet inequality~\cite{rubinstein_et_al:LIPIcs:2020:11745}, i.e., $g^{\mathrm{PI}}_t$ selects the first variable $X_i^{(t)}$ such that $X_i^{(t)} \ge \max_{i'\in[n]} X_{i'}^{(1)}$.
\end{itemize}
It is easy to see that $g^{\mathrm{PI}}_1$ achieves a competitive ratio of $1/n$.
Rubinstein et al.~\cite{rubinstein_et_al:LIPIcs:2020:11745} proved that the single-sample algorithm $g^{\mathrm{PI}}_t$ ($t \ge 2$) achieves a competitive ratio of $1/2$ for the prophet inequality.
Note that, since the algorithms guarantee competitive ratios for the adversarial order, the same competitive ratios hold even for the general order $\Dall$.

By applying \Cref{thm:main}, we obtain \Cref{cor:generic} for the baseline sequence $(g^{\mathrm{PI}}_t)_{t\in[T]}$ with parameters $t_0=\lfloor\sqrt{nT}\rfloor$ and $\kappa = n$, yielding regret $O(\sqrt{nT\log T})$.
Additionally, the same algorithms with $\kappa = \min(n|\Pi|,en!)$ and $t_0=\sqrt{\kappa T}$ work for the general order prophet inequality.

As mentioned in \Cref{thm:PI-refined},
we can obtain a near-optimal regret bound by leveraging the result of Jin et al.~\cite{jin2024sample}.
See \Cref{subsec:pf-thm:PI-refined} for details.
We remark that the regret bound is improved compared with that derived from \Cref{cor:generic}, while the probability of guaranteeing the baseline performance becomes slightly worse.

\begin{theorem}
\label{thm:PI-refined-restate}
    For the repeated prophet inequality problem $(\Dadv,\prwd;T)$ and any $\fD \in \Dadv$, we can compute a sequence of algorithms $(h^{\mathrm{PI}*}_t)_{t\in[T]}$ that achieves the following:
    \begin{enumerate}
        \item for each $t \le \lfloor\sqrt{T}\rfloor$, $h^{\mathrm{PI}*}_t = g^{\mathrm{PI}}_t$,
        \item for each $t > \lfloor\sqrt{T}\rfloor$, with probability at least $1-2/t$, $h^{\mathrm{PI}*}_t(\fD) \ge g^{\mathrm{PI}}_{\zeta(t)}(\fD)$, and
        \item the regret of $(h^{\mathrm{PI}*}_t)_{t\in[T]}$ is $O(\sqrt{T}\log T)$.
    \end{enumerate}
\end{theorem}

The algorithm used in \Cref{cor:generic} and $h^{\mathrm{PI}*}_t$ differ in the setting of $\epsilon_1$ and $\delta_1$, as well as in how they approximate the optimal online algorithm $\opton$ when the event $\cC_t$ occurs.

\subsection{Adversarial Order Secretary Problem (Best Choice Prophet Inequality)}

Next, we apply our framework to the repeated adversarial order secretary problem $(\Dadv,\pbest;T)$.

We define a sequence $(g^{\mathrm{AS}}_t)_{t\in[T]}$ of randomized threshold algorithms for the adversarial order secretary problem as follows:
\begin{itemize}
    \item Let $g^{\mathrm{AS}}_1$ be the randomized algorithm that selects one variable uniformly at random.
    \item For $2 \le t \le T$, let $g^{\mathrm{AS}}_t$ be the single-sample algorithm for the problem~\cite{DBLP:conf/soda/NutiV23}, i.e., $g^{\mathrm{AS}}_t$ selects the first variable $X_i$ such that $X_i \ge \max_{i'\in[n]} X_{i'}^{(1)}$.
\end{itemize}
For this problem, Nuti and Vondr\'ak~\cite{DBLP:conf/soda/NutiV23} proved that the above simple threshold algorithm achieves a competitive ratio of $1/4$, and no algorithm can have a better competitive ratio in the single sample setting.
We remark that the algorithm also has the same competitive ratio for the repeated general order secretary problem $(\Dall,\pbest;T)$.

Then, we have \Cref{cor:generic} for $(g^{\mathrm{AS}}_t)_{t\in[T]}$ with parameters $t_0=\lfloor \sqrt{nT}\rfloor$ and $\kappa = n$, yielding regret bound $O(\sqrt{nT\log T})$.
Furthermore, the same algorithms with $\kappa = \min(n|\Pi|,en!)$ and $t_0=\sqrt{\kappa T}$ work for the general order secretary problem.

\subsection{Prophet Secretary Problem}

We apply our general framework to the repeated prophet secretary problem $(\Dro,\prwd;T)$.
For $t=1$, let $g^{\mathrm{PS}}_1$ be the classical secretary algorithm that rejects the first $n/e$ variables and accepts the first remaining variable $X_{i}$ ($i > n/e$) such that $X_{i} \ge \max_{1\le i' \le n/e} X_{i'}$. It is well known that this algorithm can accept the maximum realized value $\max_{i'\in[n]}X_{i'}$ with probability at least $1/e$~\cite{ferguson1989solved}. This implies that $g^{\mathrm{PS}}_1$ achieves a competitive ratio of at least $1/e$. 

For $t = 2$, let $g^{\mathrm{PS}}_2$ be the single-sample algorithm for the prophet secretary, i.e., $g^{\mathrm{PS}}_t$ selects the first $X_i$ such that $X_i\ge \max_{i'\in[n]}X^{(1)}_{i'}$.
This algorithm achieves a competitive ratio of $1/2$ even for the prophet secretary problem, because the threshold 
$\max_{i'\in[n]} X^{(1)}_{i'}$ is independent of the ordering of the first round and $1/2$ guarantee holds regardless of the permutation observed in round $t$.

For $t \ge 3$, let $g^{\mathrm{PS}}_t$ be the algorithm with $t-1$ samples presented by Cristi and Ziliotto~\cite{cristi2024prophet}.
They demonstrated that, with $\Omega(1/\epsilon^5)$ samples, we can obtain an $(\alpha^{\rm PS} - \epsilon)$-competitive threshold algorithm, where $\alpha^{\rm PS}$ is the current best known competitive ratio for the prophet secretary problem. $\alpha^{\rm PS}$ is now $0.688$, which was established by~Chen et al.~\cite{Chen2025ProphetSecretary}.
Hence, $g^{\mathrm{PS}}_t$ ($t \ge 3$) achieves a competitive ratio of at least $0.688-O(t^{-\frac15})$.

By \Cref{thm:main}, we have \Cref{cor:generic} for the above-defined $(g^{\mathrm{PS}}_t)_{t\in[T]}$ with $t_0\ge \lfloor\sqrt{n!T}\rfloor$ and $\kappa = n!$, yielding regret $O(\sqrt{n!T\log T})$.

\subsection{IID Prophet Inequality}

Now, we consider the repeated i.i.d.\ prophet inequality $(\Diid,\prwd;T)$.
For this problem, it is known that the best possible competitive ratio is $\alpha^{\rm PIiid}\simeq 0.745$~\cite{hill1982comparisons,kertz1986stop,correa2017posted,liu2020variable}.

For $t=1$, we define $g_1^{\rm PIiid}$ to be the algorithm that rejects the first $\lfloor n/e \rfloor$ variables and accepts the first remaining variable $X_{i}$ ($i > n/e$) such that $X_{i} \ge \max_{1\le i' \le n/e} X_{i'}$. This algorithm achieves a competitive ratio of at least $1/e$, since this setting can be interpreted as a prophet i.i.d.\ secretary problem.
Correa et al.~\cite{correa2019prophet} provided the \emph{fresh-looking samples algorithm}, which is $(1-1/e)$-competitive for the i.i.d.\ prophet inequality when $n-1$ samples $X'_1,\dots,X'_{n-1}$ are available. The algorithm accepts the first variable $X_i$ such that $X_i$ is greater than or equal to the maximum of a random subset of size $n-1$ from the set $\{X'_1,\dots,X'_{n-1},X_1,\dots,X_{i-1}\}$.
This implies that we can construct a $(1-1/e)$-competitive algorithm $g_2^{\rm PIiid}$ for the i.i.d.\ prophet inequality by using the algorithm of Correa et al.~\cite{correa2019prophet}.


In addition, Correa et al.~\cite{correa2024sample} showed that, with $\Omega(1/\epsilon)$ samples from $\fD\in\Diid$, there exists a threshold algorithm for the i.i.d.\ prophet inequality whose competitive ratio is $\alpha^{\rm PIiid}-\epsilon$.
By this result, for $t \ge 3$, we obtain an $(\alpha^{\rm PIiid}-O(t^{-1}))$-competitive algorithm $g_t^{\rm PIiid}$.


Similarly to the proof of \Cref{thm:PI-refined}, we have \Cref{cor:generic} for baseline $(g_t^{\rm PIiid})_{t\in[T]}$ with $\kappa = n$ and $t_0=\lfloor\sqrt{T}\rfloor$.
In this case, the regret bound is $O(\sqrt{T}\log T)$.

\subsection{Random Order Secretary Problem / IID Secretary Problem}

We address the repeated random order secretary problem $(\Dro,\pbest;T)$ and the repeated secretary problem with identical distributions $(\Diid,\pbest;T)$.

For the random order secretary problem, we define baseline algorithms $(g^{\mathrm{RS}}_t)_{t\in[T]}$ as follows:
\begin{itemize}
    \item Let $g^{\mathrm{RS}}_1$ be an algorithm that rejects the first $n/e$ variables and accepts the first remaining variable $X_{i}$ ($i > n/e$) such that $X_{i} \ge \max_{1\le i' \le n/e} X_{i'}$.
    \item For $t \ge 2$, let $g^{\mathrm{RS}}_t$ be the block-rank algorithm~\cite{DBLP:conf/soda/NutiV23} for the random order secretary problem.
\end{itemize}
For the random order secretary problem with a single sample $\bX^{(1)}$, Nuti and Vondr\'ak~\cite{DBLP:conf/soda/NutiV23} showed that the block-rank algorithm achieves a competitive ratio of $0.5009$ for the random order secretary problem and $0.5024-O(1/\log n)$ for the i.i.d.\ secretary problem.
They also established that no algorithm can obtain a competitive ratio better than $0.5024$.


By \Cref{thm:main}, we have \Cref{cor:generic} for $(g^{\mathrm{RS}}_t)_{t\in[T]}$ with $t_0=\lfloor\sqrt{n!T}\rfloor$ and $\kappa = n!$, where the regret bound is $O(\sqrt{n!T\log T})$.
Note that these algorithms are also applicable with $t_0=\lfloor\sqrt{nT}\rfloor$ and $\kappa = n$ to the i.i.d.\ version.

\subsection{Last Success Problem}

We explore the repeated adversarial order last success problem $(\Dadv,\pls;T)$, where the objective is to maximize the probability of selecting the last success index, that is, $\pls(\bx,i)=\Ind{i=\max\{i'\mid x_{i'}=1\}}$.

Let $g^{\mathrm{LS}}_1$ be the randomized algorithm that selects one variable uniformly at random.
For $2 \le t \le T$, let $g^{\mathrm{LS}}_t$ be the one-sample algorithm for the adversarial order last success problem that accepts the first success after the index of the second-to-last success in the sample sequence.

Yoshinaga and Kawase~\cite{yoshinaga2024last} proved that the algorithm accepts the last success with probability at least $1/4$ if $R\ge (\sqrt{3}-1)/2$ and with probability $\frac{R(4+3R)}{4(1+R)^2}$ if $0\le R\le (\sqrt{3}-1)/2$, where
$\textstyle R=\sum_{i=1}^n \Prob{X_i=1}/\Prob{X_i<1}.$
Combining this with the fact that 
$$
\optoff(\fD;\pls)=1-\prod_{i=1}^n\Prob{X_i<1}\le 1-e^{-R},$$ 
we can obtain that the competitive ratio of the algorithm is at least $1/4$.

By \Cref{thm:main}, we have \Cref{cor:generic} for $(g^{\mathrm{LS}}_t)_{t\in[T]}$ with
$t_0=\lfloor \sqrt{nT}\rfloor$ and $\kappa = n$.
Note that the one-sample algorithm is also applicable in the general order model $\Dall$, since we can count the number of successes in the sample sequence from the distributions $D_i$ that have not yet been observed.

\subsection{Ski-rental Problem}

We analyze the repeated ski-rental problem $(\Dadv,\pski;T)$, where the objective is to minimize the total cost $p(\bx,i)=\sum_{j=1}^{i-1}x_j+b$. Note that the maximum cost $B$ in each round is at most $n+b$.

It is known that there exists a randomized algorithm without samples that achieves a competitive ratio of $e/(e-1)$ for this problem~\cite{karlin1994competitive}.\footnote{This online algorithm works in our setting by viewing each rental cost $x_i$ as a sequence of infinitesimally small costs.}
This online algorithm accepts the $i$th variable if the cumulative rental cost $\sum_{j=1}^{i}x_j$ exceeds a randomly preselected value.
Then, for each $t\in[T]$, let $g^{\rm SKI}_t$ be the above algorithm without samples.
By \Cref{thm:main}, we have \Cref{cor:generic} for $(g^{\rm SKI}_t)_{t\in [T] }$ with $t_0=\lfloor\sqrt{nT}\rfloor$ and $\kappa=n$.
In this problem, the regret bound is $O((n+b)\sqrt{nT\log T})$.

\section{Standard Lemma}

\begin{lemma} \label{lem:kl-bound}
For any constant $0 < c < 1/2$, there exists a constant $C_c > 0$ such that for all $x, y \in [-c, c]$, 
\[
\KLdiv{\mathrm{Ber}(1/2 + x)}{\mathrm{Ber}(1/2 + y)} \le C_c (x - y)^2.
\]
\end{lemma}

\begin{proof}
Using the standard inequality between KL divergence and $\chi^2$-divergence, we have:
\begin{align}
\KLdiv{\mathrm{Ber}(1/2 + x)}{\mathrm{Ber}(1/2 + y)}
&\le \frac{(x-y)^2}{1/2 - y} + \frac{(x-y)^2}{1/2 + y}
= \frac{(x - y)^2}{1/4 - y^2}.
\end{align}
Since $|y| \le c$, the denominator is lower-bounded by $1/4 - c^2 > 0$. Thus, the inequality holds by setting $C_c = (1/4 - c^2)^{-1}$.
\end{proof}

\section{Proof of \Cref{lem:threshold}}\label{subsec:proof threshold}
We construct a threshold algorithm $h^*=(f^*_i)_{i=1}^n$ iteratively from $i=n$ to $1$.

For $i\in[n]$, $\bx_{\le i}\in [0,1]^i$, and $\tau_{\le i}\in\Inj([i],[n])$, let $\beta(\bx_{\le i},\tau_{\le i})$ be the optimal expected profit (of an online algorithm) if the realization of $(\bX_{\le i}, \pi_{\le i})$ is $(\bx_{\le i},\tau_{\le i})$, and the first $i$ variables are rejected, that is,
\begin{align}
\beta(\bx_{\le i},\tau_{\le i})
&= \sup_{h\in\cH:\,\text{always reject first $i$ variables}}\cExpec{h(\bX,\pi;p)}{\begin{array}{l}\bX_{\le i}=\bx_{\le i},\\\pi_{\le i}=\tau_{\le i}\end{array}}\\
&= \sup_{f_{i+1},\dots,f_n}\cExpec{p(\bX,\min\{j\in \{i+1,\dots,n\}\mid f_j(\bx_{\le j},\tau_{\le j})=1\})}{\begin{array}{l}\bX_{\le i}=\bx_{\le i},\\\pi_{\le i}=\tau_{\le i}\end{array}},
\end{align}
where we denote $\min \emptyset = n+1$.
By definition, we have $\beta(\bx,\tau)=p(\bx,n+1)$ for any $\bx\in [0,1]^n$ and $\tau \in \Inj([n],[n])$.
In addition, for $i\in[n]$, $\bx_{\le i}\in [0,1]^i$, and $\tau_{\le i}\in\Inj([i],[n])$, let $\alpha(\bx_{\le i},\tau_{\le i})$ be the expected profit if the realization of $(\bX_{\le i}, \pi_{\le i})$ is $(\bx_{\le i}, \tau_{\le i})$, and the $i$th variable is accepted, i.e., 
\begin{align}
\alpha(\bx_{\le i},\tau_{\le i})\coloneqq
\cExpec{p(\bX,i)}{\bX_{\le i}=\bx_{\le i}\text{ and }\pi_{\le i}=\tau_{\le i}}.
\end{align}
Then, when the realizations of the first steps are $\bx_{\le i}$ and $\tau_{\le i}$ and the first $i-1$ variables are rejected,
an optimal online algorithm accepts the $i$th variable if 
\begin{align}
\alpha(\bx_{\le i},\tau_{\le i}) \ge \beta(\bx_{\le i},\tau_{\le i}).
\end{align}

Suppose that 
\begin{align}
(f^*_{i+1},\dots,f^*_n)\in\argmax_{f_{i+1},\dots,f_n}\cExpec{p(\bX,\min\{j\in \{i+1,\dots,n\}\mid f_j(\bx_{\le j},\tau_{\le j})=1\})}{\begin{array}{l}\bX_{\le i}=\bx_{\le i},\\\pi_{\le i}=\tau_{\le i}\end{array}}.
\end{align}
For notational convenience, let $\gtrsim_j$ represent $\ge$ if $f^*_j(\bx_{\le j},\tau_{\le j}) = \Ind{x_j \ge \theta_j(\tau_{\le j})}$ and $>$ otherwise, i.e., $f^*_j(\bx_{\le j},\tau_{\le j}) = \Ind{x_j > \theta_j(\tau_{\le j})}$.
Then, we have
\begin{align}
\MoveEqLeft[1]
\beta(\bx_{\le i},\tau_{\le i})-\alpha(\bx_{\le i},\tau_{\le i})\\
&=\max_{f_{i+1},\dots,f_n}\cExpec{p\big(\bX,\min\big\{j\in \{i+1,\dots,n\}\mid f_j(\bx_{\le j},\tau_{\le j})=1\big\}\big)-p(\bX,i)}{\begin{array}{l}\bX_{\le i}=\bx_{\le i},\\\pi_{\le i}=\tau_{\le i}\end{array}}\\
&=\cExpec{p\big(\bX,\min\big\{j\in \{i+1,\dots,n\}\mid f^*_j(\bx_{\le j},\tau_{\le j})=1\big\}\big)-p(\bX,i)}{\begin{array}{l}\bX_{\le i}=\bx_{\le i},\\\pi_{\le i}=\tau_{\le i}\end{array}}\\
&=\cExpec{p\big(\bX,\min\big\{j\in \{i+1,\dots,n\}\mid {X_j\gtrsim_j \theta_j(\tau_{\le j})}\big\}\big)-p(\bX,i)}{\begin{array}{l}\bX_{\le i}=\bx_{\le i},\\\pi_{\le i}=\tau_{\le i}\end{array}}\\
&=\cExpec{q_i\big(\bX_{\ge i},\min\big\{j\in \{i+1,\dots,n\}\mid {X_j\gtrsim_j \theta_j(\tau_{\le j})}\big\}\big)}{\begin{array}{l}\bX_{\le i}=\bx_{\le i},\\\pi_{\le i}=\tau_{\le i}\end{array}}\\
&=\cExpec{q_i\big(\bX_{\ge i},\min\big\{j\in \{i+1,\dots,n\}\mid {X_j\gtrsim_j \theta_j(\tau_{\le j})}\big\}\big)}{\begin{array}{l}\bX_{i}=\bx_{i},\\\pi_{\le i}=\tau_{\le i}\end{array}}. \label{eq:gamma-pstar}
\end{align}
Thus, whether $\alpha(\bx_{\le i},\tau_{\le i}) \ge \beta(\bx_{\le i},\tau_{\le i})$ does not depend on the realization of the first $i-1$ variables.

Define a threshold as
\begin{align}
    \theta_i(\tau_{\le i}) \coloneqq \inf\left\{x_i\in[0,1] ~\middle|~ \cExpec{q_i\big(\bX_{\ge i},\min\big\{j\in \{i+1,\dots,n\}\mid {X_j\gtrsim_j \theta_j(\tau_{\le j})}\big\}\big)\!}{\!\!\begin{array}{l}X_i=x_i,\\\pi_{\le i}=\tau_{\le i}\end{array}\!\!}\le 0\right\}.
\end{align}
If $\alpha((\bx_{\le i-1},x),\tau_{\le i}) < \beta((\bx_{\le i-1},x),\tau_{\le i})$ for all $x\in[0,1]$, then we set $\theta_i(\tau_{\le i}) = \infty$.
Note that the definition of $\theta_i$ does not depend on the choice of $\bx_{\le i-1}$.

By the properties (\ref{prop:p-i}) and (\ref{prop:p-ii}), the value of $q_i(\bx_{\ge i},j)=p(\bx,j)-p(\bx,i)$ is monotone non-increasing with respect to $x_i$. 
This implies that 
$$
\cExpec{q_i\big(\bX_{\ge i},\min\big\{j\in \{i+1,\dots,n\}\mid {X_j\gtrsim_j \theta_j(\tau_{\le j})}\big\}\big)}{X_i=x_i\text{ and }\pi_{\le i}=\tau_{\le i}}
$$ is also monotone non-increasing with respect to $x_i$.
Therefore, if $x_i>\theta_i(\tau_{\le i})$, we have $\alpha(\bx_{\le i},\tau_{\le i})\ge\beta(\bx_{\le i},\tau_{\le i})$ by \eqref{eq:gamma-pstar}.
Also, if $x_i<\theta_i(\tau_{\le i})$, we have $\alpha(\bx_{\le i},\tau_{\le i})<\beta(\bx_{\le i},\tau_{\le i})$ by \eqref{eq:gamma-pstar}.
Thus, $f_i^*(\bx_{\le i},\tau_{\le i})=\Ind{x_i\ge\theta_i(\tau_{\le i})}$ or $\Ind{x_i >\theta_i(\tau_{\le i})}$ is an optimal choice of the function.
Hence, by induction, we conclude that the threshold algorithm $h^*=(f_i^*)_{i=1}^n$ is an optimal online algorithm.

\section{Proof of \Cref{thm:semibandit-impossible}}
\label{app:pf-semibandit-impossible}

Let $T \ge e^4$, and consider a prophet inequality instance with $n = 2$, where $X_1 \equiv 1/2$ and $X_2 \sim \mathrm{Ber}(q)$.
Let $\fD_+$ denote the distribution with $q = 1/2 + 1/\log T$, and $\fD_-$ denote the distribution with $q = 1/2 - 1/\log T$.
Let the baseline algorithm $g_t$ always select $X_1$.

Fix an arbitrary algorithm sequence $(h_t)_{t \in [T]}$ that satisfies the baseline condition. That is, assume that there exists a constant $C > 0$ such that for all $\fD \in \DD$ and $t \in [T]$, the probability that the expected reward of $h_t$ is at least that of $g_t$ is at least $1 - C/t$.
We show that the regret with respect to $\fD_+$ is $\Omega(T / \log T)$.

Let $A_t$ denote the index of the variable selected by $h_t$, and let $N_i$ $(i = 1,2)$ denote the number of times $X_i$ is selected over all rounds.
Let $\Prob[\pm]{\cdot}$ denote the probability measure over the observation history under $\fD_{\pm}$, and let $\Expec[\pm]{\cdot}$ denote the expectation with respect to the joint distribution of $(A_t)_{t \in [T]}$ under $\fD_{\pm}$.

Under $\fD_-$, the expected reward of selecting $X_2$ is $1/2 - 1/\log T$, which is strictly smaller than the reward $1/2$ of $g_t$, and this holds regardless of the past history. Therefore, the event that the conditional expected reward of $h_t$, given the past samples, is smaller than that of $g_t$ is equivalent to the event $A_t = 2$.
Hence, the probability that $h_t$ selects (and observes) $X_2$ at round $t$ is at most $C/t$, and thus
$\Expec[-]{N_2} \le \sum_{t=1}^{T} {C}/{t} = O(\log T)$.
By the definition of total variation distance $d_{\mathrm{TV}}(\cdot,\cdot)$ and Pinsker's inequality, we have
\[
\left|\Prob[+]{A_t = 2} - \Prob[-]{A_t = 2}\right|
\le d_{\mathrm{TV}}(\Pr_+, \Pr_-)
\le \sqrt{\KLdiv{\Pr_-}{\Pr_+}/2}.
\]

By the chain rule of KL divergence and the fact that the distributions differ only when $A_t = 2$, we obtain
\begin{align}
  \KLdiv{\Pr_-}{\Pr_+} 
  &= \Expec[-]{N_2} \cdot \KLdiv*{\mathrm{Ber}\left(\frac12 - \frac{1}{\log T}\right)}{\mathrm{Ber}\left(\frac12 + \frac{1}{\log T}\right)} \\
  & \le O(\log T) \cdot O( \log^{-2} T )
  = O(\log^{-1} T),
\end{align}
where the inequality follows from $\log T \ge 4$ and \Cref{lem:kl-bound}.

Therefore, $\left|\Prob[+]{A_t = 2} - \Prob[-]{A_t = 2}\right| = O(\log^{-1/2} T)$, which implies
\begin{align}
  \Expec[+]{N_2}
  &\le \Expec[-]{N_2} + T \cdot O(\log^{-1/2} T)
  = O(T \cdot \log^{-1/2} T),\\
  \Expec[+]{N_1}
  &= T - \Expec[+]{N_2}
  = T - O(T \cdot \log^{-1/2} T)
  = \Omega(T).
\end{align}

Under $\fD_+$, each time $X_1$ is selected incurs a regret of $1/\log T$, and therefore the regret of $(h_t)_{t \in [T]}$ under $\fD_+$ is $\Omega(T/\log T)$.

\section{Proof of \Cref{thm:full-feedback-as-impossible}}\label{sec:impossible}

Consider a two-variable prophet inequality problem with random variables $X_1 \equiv 1/2$ and $X_2 \sim \mathrm{Ber}(q)$. Let the baseline policy $g_t$ always select $X_1$.

We define two distributions $\fD_+$ and $\fD_-$ by setting $q = 1/2 + \epsilon$ and $q = 1/2 - \epsilon$, respectively, for some fixed constant $\epsilon \in (0,1/2)$. Note that under $\fD_+$, the optimal policy selects $X_2$, while under $\fD_-$, the baseline policy $g_t$ is optimal.

Consider any sequence of algorithms $(h_t)_{t \in [T]}$. In the full-feedback setting, at each round the algorithm observes the realizations of both $X_1$ and $X_2$, and hence after $t-1$ rounds, its decision at round $t$ can be written as a function of the past observations of $X_2$. In particular, let $m$ denote the number of past rounds in which $X_2 = 1$.

Fix any round $t$.
For any $m \in \{0,\dots,t-1\}$, the event that $X_2=1$ occurs exactly $m$ times in the first $t-1$ rounds has positive probability under both $\fD_+$ and $\fD_-$.
Therefore, conditioned on any fixed value of $m \in \{0,1,\dots,t-1\}$, if the algorithm selects $X_2$ with positive probability, then under $\fD_-$ the expected reward at round $t$ is strictly smaller than the baseline value $1/2$, since $\mathbb{E}[X_2] = 1/2 - \epsilon$.


It follows that, in order to satisfy
$\Prob{h_t(\fD_-) \ge g_t(\fD_-)} = 1$
for every round $t \in [T]$, the algorithm must select $X_1$ with probability one at every round, regardless of the observed history.
However, under $\fD_+$, this strategy incurs regret $\epsilon$ at each round, and hence the total regret is $\epsilon T = \Omega(T)$.

Therefore, no algorithm sequence $(h_t)_{t \in [T]}$ can simultaneously achieve sublinear regret on $\fD_+$ and guarantee
$\Prob{h_t(\fD_-) \ge g_t(\fD_-)} = 1$
for all $t \in [T]$, completing the proof.




\section{Proof of \Cref{prop:naive-failure}}
\label{app:pf-prop:naive-failure}

Let $T \ge 64$ and
consider a prophet inequality instance with $n = 2$, where $X_1 \equiv 1/2$ and $X_2 \sim \mathrm{Ber}(q)$.
Let $\fD_+$ and $\fD_-$ be the distributions with $q = 1$ and $q = 1/2 - 1/T$ respectively.
At the beginning of round $t$, let $N_2^{(t)}$ be the number of times $X_2^{(s)} = 1$ for $s \in [t-1]$, and define the unbiased estimator $\hat{q}^{(t)} \coloneqq N_2^{(t)}/(t-1)$.

At round $t$, let $h_t$ be an empirically optimal algorithm that selects $X_2$ if $\hat{q}^{(t)} \ge 1/2$, and selects $X_1$ otherwise. Let the baseline algorithm $g_t$ always select $X_1$ and obtain reward $1/2$.

We check the performance of the algorithm for $\fD_+$.
In this case, selecting $X_2$ is the best option, and the algorithm incurs regret of at least $\sum_{t \in [T]} \epsilon_t/2$.
If $\sum_{t \in [T]} \epsilon_t/2 \geq T/8$, then the algorithm incurs a linear regret, and we are done.

Suppose $\sum_{t \in [T]} \epsilon_t/2 < T / 8$.
We can observe that there exist at least $T/2$ rounds $t \in [T]$ such that $\epsilon_t < 1/2$.
%
For $\fD_-$, we have $h_t(\fD_-) < g_t(\fD_-)$ if and only if $h_t$ selects $X_2$ with positive probability. 
We will show that $h_t$ incorrectly selects $X_2$ with probability at least $1/8$ for any round $t$ with $\epsilon_t < 1/2$.

The empirically optimal algorithm $h_{t}$ selects $X_2$ if and only if $\hat{q}^{(t)} \ge 1/2$. When $q = 1/2$, by symmetry, the probability that $\hat{q}^{(t)} \ge 1/2$ is at least $1/2$.
Let $D_0^{\otimes}$ be the joint distribution of $t-1$ i.i.d.\ samples from $\mathrm{Ber}(1/2)$, and let $D_-^{\otimes}$ be that from $\mathrm{Ber}(1/2 - 1/T)$. Then, by Pinsker's inequality, the total variation distance satisfies
\begin{align}
  d_{\mathrm{TV}}(D_0^{\otimes}, D_-^{\otimes})
  &\le \sqrt{\frac{t - 1}{2} \cdot \mathrm{KL}\Big(\mathrm{Ber}(1/2 - 1/T) \;\Big\|\; \mathrm{Ber}(1/2)\Big)} \\
  & \le 2\sqrt{T \cdot 1/T^2} = 2/\sqrt{T} \leq 1/4.
\end{align}
where the second inequality is due to $T \ge 64$ and \Cref{lem:kl-bound} with $c=1/64$ and $C_c \leq 8$.
Therefore, in the actual instance, the probability that $h_{t}$ incorrectly selects $X_2$ is
\[
\Prob[q=1/2-1/T]{\hat{q}^{(t)} \ge 1/2}
\ge
\Prob[q=1/2]{\hat{q}^{(t)} \ge 1/2} - d_{\mathrm{TV}}(D_0^{\otimes}, D_-^{\otimes})
\ge
\frac{1}{2} - \frac{1}{4}=\frac{1}{4}.
\]
Therefore, when $\epsilon_t < 1/2$,
we have
\[
\Prob{h'_t(\fD_-) < g_t(\fD_-)}
= (1 - \epsilon_t)\cdot\Prob[q=1/2-1/T]{\hat{q}^{(t)} \ge 1/2} \ge 1/8.
\]


\section{Proof of \Cref{lem:massart}}\label{subsec:massart}
To prove \Cref{lem:massart}, we use the following lemma.
\begin{lemma}[Massart's Lemma~{\cite[Lemma 5.2]{massart2000some}}]
\label{lem:massart---}
    Let $A$ be a finite subset of $\R^t$, and let $r \coloneqq \max_{a \in A} \|a\|_2$. Then,
    \begin{align}
        \E_{\sigma} \left[ \frac{1}{t} \sup_{a \in A} \sum_{s=1}^t \sigma_s a_s \right] \le \frac{r \sqrt{2 \ln |A|}}{t},
    \end{align}
    where $\sigma_1,\ldots, \sigma_t$ are independent Rademacher random variables.
\end{lemma}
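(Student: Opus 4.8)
The plan is to prove this one-sided maximal inequality by the standard exponential-moment (Chernoff) method, using only elementary facts about Rademacher variables rather than Massart's Lemma itself. Write $Z \coloneqq \sup_{a\in A}\sum_{s=1}^t \sigma_s a_s$; since $A$ is finite the supremum is a maximum over finitely many random variables, so $Z$ is a well-defined random variable. Because the target right-hand side is exactly $r\sqrt{2\ln|A|}$ divided by $t$, it suffices to prove $\E_\sigma[Z]\le r\sqrt{2\ln|A|}$ and then divide through by $t$ at the very end.

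First I would fix an arbitrary $\lambda>0$ and apply Jensen's inequality to the convex map $x\mapsto e^{\lambda x}$ to obtain $\exp(\lambda\,\E[Z])\le \E[e^{\lambda Z}]$. Next, bounding the maximum by the sum over $A$ gives $\E[e^{\lambda Z}]=\E[\max_{a\in A}\exp(\lambda\sum_{s}\sigma_s a_s)]\le \sum_{a\in A}\E[\exp(\lambda\sum_{s}\sigma_s a_s)]$. Since the $\sigma_s$ are independent, each summand factorizes as $\prod_{s=1}^t \E[\exp(\lambda\sigma_s a_s)]=\prod_{s=1}^t\cosh(\lambda a_s)$, using $\E[e^{\lambda\sigma_s a_s}]=\tfrac12(e^{\lambda a_s}+e^{-\lambda a_s})=\cosh(\lambda a_s)$.

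The key analytic ingredient is the sub-Gaussian bound $\cosh(x)\le e^{x^2/2}$, which I would establish by comparing the Taylor series $\cosh(x)=\sum_{k\ge 0} x^{2k}/(2k)!$ with $e^{x^2/2}=\sum_{k\ge 0} x^{2k}/(2^k k!)$ term by term, using $(2k)!\ge 2^k k!$. Applying this with $x=\lambda a_s$ and then invoking $\sum_{s} a_s^2=\|a\|_2^2\le r^2$ yields $\E[\exp(\lambda\sum_{s}\sigma_s a_s)]\le \exp(\lambda^2 r^2/2)$ for every $a\in A$. Combining the estimates gives $\exp(\lambda\,\E[Z])\le |A|\exp(\lambda^2 r^2/2)$, and taking logarithms produces $\E[Z]\le \frac{\ln|A|}{\lambda}+\frac{\lambda r^2}{2}$.

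Finally I would optimize the free parameter by choosing $\lambda=\sqrt{2\ln|A|}/r$, which balances the two terms and yields $\E[Z]\le r\sqrt{\ln|A|/2}+r\sqrt{\ln|A|/2}=r\sqrt{2\ln|A|}$; dividing by $t$ then gives the claimed bound. I expect no serious obstacle: the only mildly delicate step is the $\cosh$ inequality, but that reduces to the elementary term-by-term comparison above, and everything else is a routine union-plus-Chernoff argument closed out by a one-variable optimization.
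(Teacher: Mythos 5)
Your proof is correct. Note that the paper does not prove this statement at all: it is imported verbatim as Massart's Lemma~\cite[Lemma 5.2]{massart2000some}, and the paper's only original work in this vicinity is the short symmetrization step in \Cref{subsec:massart} deducing the two-sided bound with $2|A|$ from the one-sided statement. Your argument is the canonical exponential-moment proof of the cited result, and every step checks out: Jensen applied to $x\mapsto e^{\lambda x}$, the union bound $\E[e^{\lambda Z}]\le\sum_{a\in A}\E[\exp(\lambda\sum_s\sigma_s a_s)]$ (valid since $\lambda>0$ makes the exponential increasing), the factorization into $\prod_s\cosh(\lambda a_s)$ by independence, the sub-Gaussian bound $\cosh(x)\le e^{x^2/2}$ via the term-by-term comparison $(2k)!\ge 2^k k!$, and the optimization $\lambda=\sqrt{2\ln|A|}/r$ yielding $\E[Z]\le r\sqrt{2\ln|A|}$. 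The only caveat is the degenerate cases: if $|A|=1$ your choice $\lambda=\sqrt{2\ln|A|}/r=0$ is inadmissible, and if $r=0$ it is undefined; but in either case $Z$ has mean zero (respectively is identically zero), so the bound holds trivially, and alternatively one can phrase the final step as taking the infimum over $\lambda>0$ of $\ln|A|/\lambda+\lambda r^2/2$. With that one-line remark added, the proof is complete and self-contained, which is arguably a small improvement over the paper's bare citation.
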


Let $-A \coloneqq \{-a \mid a \in A\}$. Obviously, the following holds:
\begin{align}
    \sup_{a \in A \cup -A} \sum_{s=1}^t \sigma_s a_s
    =  \sup_{a \in A} \left| \sum_{s=1}^t \sigma_s a_s \right|.
\end{align}
Then, we have
\begin{align}
    \rade(A) 
    = \E_{\sigma} \left[ \frac{1}{t} \sup_{a \in A}\left| \sum_{s=1}^t \sigma_s a_s \right| \right] 
    = \E_{\sigma} \left[ \frac{1}{t} \sup_{a \in A \cup -A} \sum_{s=1}^t \sigma_s a_s \right] 
    \le \frac{r \sqrt{2 \ln (2|A|)}}{t},
\end{align}
where the inequality is due to Lemma~\ref{lem:massart---}.

\section{Omitted Proofs from \Cref{sec:general}}
\label{sec:omitted-proofs-sec4}

\subsection{Proof of \Cref{thm:uniform-comp-and-regret}}
\label{subsec:pf-thm:uniform-comp-and-regret}

    First, we show that $h_t^*(\fD) \ge g_{\zeta(t)}(\fD)$ when $\cE_t$ occurs.
    Under the event $\cE_t \wedge \cC_t$, $h_t^*$ uses $h_{\zeta(t)}$ and we have
    \begin{align}
        h_{\zeta(t)}(\fD)
        &\ge \hat{h}_{\zeta(t)}(\History{t};\zeta(t)) - \epsilon(t) \\
        &\ge \hat{g}_{\zeta(t)}(\History{t};\zeta(t)) + \epsilon(t)
        \ge g_{\zeta(t)}(\fD).
    \end{align}
    Then, we obtain $h_t^*(\fD) \ge g_{\zeta(t)}(\fD)$.
    Under the event $\cE_t \wedge \neg\cC_t$, $h_t^*$ uses $g_{\zeta(t)}$, and then we also obtain $h_t^*(\fD) \ge g_{\zeta(t)}(\fD)$.
    Therefore, when $\cE_t$ occurs, we have $h_t^*(\fD) \ge g_{\zeta(t)}(\fD)$.

    Next, we bound the expected regret. 
    We have
    \begin{align}
        \MoveEqLeft[1]
        \opton(\fD) - \Expec{h_t^*(\fD)}\\
        &= \Prob{\cE_t}\cdot\cExpec{\opton(\fD)-h_t^*(\fD)}{\cE_t} + \Prob{\neg\cE_t}\cdot\cExpec{\opton(\fD)-h_t^*(\fD)}{\neg\cE_t} \\
        & \le \cExpec{\opton(\fD)-h_t^*(\fD)}{\cE_t} + B\delta(t) \\
        & = \begin{multlined}[t]
            \Prob{\cC_t\mid\cE_t}\cdot\cExpec{\opton(\fD)-h_{\zeta(t)}(\fD)}{\cC_t,\cE_t} \\
            + \Prob{\neg\cC_t\mid\cE_t}\cdot\cExpec{\opton(\fD)-g_{\zeta(t)}(\fD)}{\neg\cC_t,\cE_t} + B\delta(t)   
        \end{multlined}
        \\
        & = \begin{multlined}[t]
        \Prob{\cC_t\mid\cE_t}\cdot\cExpec{\opton(\fD)-h_{\zeta(t)}(\fD)}{\cC_t,\cE_t} \\
        +\Prob{\neg\cC_t\mid\cE_t}\cdot\cExpec{\opton(\fD)-h_{\zeta(t)}(\fD)+h_{\zeta(t)}(\fD)-g_{\zeta(t)}(\fD)}{\neg\cC_t,\cE_t} + B\delta(t) 
        \end{multlined}\\
        & = \cExpec{\opton(\fD)-h_{\zeta(t)}(\fD)}{\cE_t} + \Prob{\neg\cC_t\mid\cE_t}\cdot\cExpec{h_{\zeta(t)}(\fD)-g_{\zeta(t)}(\fD)}{\neg\cC_t,\cE_t} + B\delta(t) \\
        & \le \Prob{\neg\cC_t\mid\cE_t}\cdot\cExpec{h_{\zeta(t)}(\fD)-g_{\zeta(t)}(\fD)}{\neg\cC_t,\cE_t}
        + 2\epsilon(t) + B\delta(t),
    \end{align}
    where the first inequality holds because $\Prob{\neg\cE_t}\le \delta(t)$ and $\opton(\fD)\le B$, and $\Prob{\neg\cE_t}\le \delta(t)$, and the second inequality is due to the definition of $\cE_t$.

    For readability, we abbreviate $\hat{g}_{\zeta(t)}(\History{t};\zeta(t))$ and $\hat{h}_{\zeta(t)}(\History{t};\zeta(t))$ as $\hat{g}_{\zeta(t)}$ and $\hat{h}_{\zeta(t)}$ respectively.
    Then, the term $\cExpec{h_{\zeta(t)}(\fD) - g_{\zeta(t)}(\fD)}{\neg\cC_t,\cE_t}$ is bounded as
    \begin{align}
        \MoveEqLeft
        \cExpec{h_{\zeta(t)}(\fD) - g_{\zeta(t)}(\fD)}{\neg\cC_t,\cE_t}\\
        & \le \cExpec{\hat{h}_{\zeta(t)} - \hat{g}_{\zeta(t)}}{\neg\cC_t,\cE_t}
        + 2\epsilon(t) \\
        & = \cExpec{(\hat{h}_{\zeta(t)}-\epsilon(t)) - (\hat{g}_{\zeta(t)} + \epsilon(t))}{\neg\cC_t,\cE_t} + 4\epsilon(t) \le 4\epsilon(t).
    \end{align}
    By combining these, we obtain
    \begin{align}
        \opton(\fD) - \Expec{h_t^*(\fD)} 
        \le (2\epsilon(t) + B\delta(t)) + 4\epsilon(t)
        = 6\epsilon(t) + B\delta(t).  
    \end{align}    
This completes the proof.

\subsection{Proof of \Cref{col:delta-epsilon}}
\label{app:pf-col:delta-epsilon}

For any positive integer $t$, let $M_{t-1}$ be a positive integer such that
\begin{align}\label{eq:bound-M}
M_{t-1}= \sup_{(\bx^{(1)},\tau^{(1)}),\dots,(\bx^{(t-1)},\tau^{(t-1)})}\left|\left\{\big(h(\bx^{(1)},\tau^{(1)}),\dots,h(\bx^{(t-1)},\tau^{(t-1)})\big) ~\middle|~ h\in\cHth \right\}\right|
\end{align}
where the supremum is taken over all choices of $\bx^{(1)},\dots,\bx^{(t-1)}$ in $[0,1]$ and $\tau^{(1)},\dots,\tau^{(t-1)}$ in the set of permutations that occur with positive probability.
By utilizing the uniform concentration bound (\Cref{thm:ulln}), we can select $\epsilon_1(t)$ and $\delta_1(t)$ such that the assumption \eqref{eq:sample-complexity} holds as follows.
\begin{claim}\label{lem:delta-epsilon}
    For each integer $t\ge 2$ and any $M\ge M_{t-1}$, the algorithm $h_{\hat{\fD}^{(t)}}$ and the following settings of $\epsilon_1(t)$ and $\delta_1(t)$ satisfy the assumption \eqref{eq:sample-complexity}:
    \begin{align}
        \epsilon_1(t) \coloneqq 6B \sqrt{{2 \ln (2M)}/{(t-1)}} \text{ and } \;
        \delta_1(t) &\coloneqq {1}/{M}.
    \end{align}
\end{claim}
\begin{claimproof}
    Since we assume that the range of the profit is bounded by $[0,B]$, we have
    \begin{align}           \left\|\big(h(\bx^{(1)},\tau^{(1)}),\dots,h(\bx^{(t-1)},\tau^{(t-1)})\big)\right\|_2\le B\sqrt{t-1}
    \end{align}
    for any $(\bx^{(1)},\tau^{(1)}),\dots,(\bx^{(t-1)},\tau^{(t-1)})$ in the support of $\fD$.
    Setting $\eta = B\sqrt{\frac{2 \ln (2M)}{t-1}}$, we have
    \begin{align}
    2\rade_{t-1,\fD}(\cHth)+\eta 
    &\le 2\frac{B\sqrt{t-1}\cdot\sqrt{2\ln(2M)}}{t-1}+ B\sqrt{\frac{2 \ln (2M)}{t-1}}
    = 3B\sqrt{\frac{2 \ln (2M)}{t-1}}=\frac{\epsilon_1(t)}{2},
    \end{align}
    where we use \Cref{lem:massart} to obtain the inequality.
    Then, by \Cref{thm:ulln}, we have
    \begin{align}
    \MoveEqLeft
        \Prob[(\bX^{(1)},\pi^{(1)}),\dots,(\bX^{(t-1)},\pi^{(t-1)}) \simiid \fD]{\sup_{h\in\cHth}\left|h(\fD) - h(\hat{\fD}^{(t)})\right|>\frac{1}{2}\epsilon_1(t)}\\
        &\le \Prob[(\bX^{(1)},\pi^{(1)}),\dots,(\bX^{(t-1)},\pi^{(t-1)}) \simiid \fD]{\sup_{h\in\cHth}\left|h(\fD) - h(\hat{\fD}^{(t)})\right|> 2\rade_{t-1,\fD}(\cHth)+\eta}\\
        &\le 2\exp\left(-\frac{(t-1)\eta^2}{2B^2}\right)
        = 2\exp\left(-\frac{(t-1)\cdot (B^2\cdot 2\ln(2M)/(t-1))}{2\cdot B^2}\right)\\
        &= 2\exp\left(-\ln(2M)\right)
        =\frac{1}{M}=\delta_1(t).
    \end{align}
    Thus, with probability at least $1-\delta_1(t)$, we obtain $\sup_{h\in\cHth}\left|h(\fD) - h(\hat{\fD}^{(t)})\right| \leq \frac{1}{2}\epsilon_1(t)$ for each $h \in \cHth$. If this holds, then
    \begin{align}
    h_{\hat{\fD}^{(t)}}(\fD) 
    &\ge h_{\hat{\fD}^{(t)}}(\hat{\fD}^{(t)}) - \epsilon_1(t)/2
    \ge h_{\fD}(\hat{\fD}^{(t)}) - \epsilon_1(t)/2\\
    &\ge h_{\fD}(\fD) - \epsilon_1(t)
    = \opton(\fD) - \epsilon_1(t),
    \label{eq:opt-approximate-}
    \end{align}
    where we use the optimality of $h_{\hat{\fD}^{(t)}}(\hat{\fD}^{(t)})$ in the second inequality.
    Therefore, $h_{\hat{\fD}^{(t)}}$ and the above-defined $\epsilon_1(t),\delta_1(t)$ satisfy the assumption \eqref{eq:sample-complexity}.
\end{claimproof}

In general, $M_{t-1}$ in \eqref{eq:bound-M} can be upper bounded by $(n+1)^{t-1}$, since each of the $t-1$ rounds can have $n+1$ different outcomes.
However, combining this upper bound with \Cref{lem:delta-epsilon} only yields a regret bound that is linear in the number of rounds $T$.
Thus, we need a more refined upper bound for $M_{t-1}$ to achieve a sublinear regret bound, but in general this is not possible.
To handle this, we focus on a specific class of algorithms $\cHth$.

\begin{claim}
\label{lem:general-finite}
    Suppose that $\cHth$ is the set of threshold algorithms or the set of essentially threshold algorithms.
    For each $t\in[T]$ and any constant $C\ge 1$, we have
    \begin{align}
    M_{t-1}\le t^{\sum_{i=1}^n |\{\tau_{\le i}^{(s)}\mid s\in[t-1]\}|} \le t^{\min(n|\Pi|,en!)} \le (Ct)^{\min(n|\Pi|,en!)},
    \end{align}
    where $|\Pi|$ be the number of distinct permutations that appear with positive probability in $\fD$.
\end{claim}

\begin{claimproof}
    Recall that any threshold algorithm $h \in \cHth$ can be specified by a sequence of threshold functions $(\theta_1, \ldots, \theta_n)$, where $\theta_i\colon \Inj([i],[n])\to[0,1]\cup\{\infty\}$ for each $i\in[n]$.
    For each $i \in [n]$ and $\tau_{\le i}\in\Inj([i],[n])$, consider the set of observed values denoted by $\chi(\tau_{\le i})=\{x_i^{(s)} \mid \tau^{(s)}_{\le i}=\tau_{\le i}, s\in [t-1]\}$. 
    The threshold $\theta_i(\tau_{\le i})$ can be classified into at most $|\chi(\tau_{\le i})|+1$ types, depending on which of the $|\chi(\tau_{\le i})|+1$ subintervals of $[0,\infty]$, determined by partitioning it with the observed values, the threshold falls into.

    Therefore, in round $t$, the total number of distinct output patterns over all $n$ steps is at most 
    \begin{align}
    \prod_{i=1}^n \prod_{\tau_{\le i}\in\Inj([i],[n])}(|\chi(\tau_{\le i})|+1)
    &=\prod_{i=1}^n \prod_{\tau_{\le i}\in\{\tau_{\le i}^{(s)}\mid s\in[t-1]\}}(|\chi(\tau_{\le i})|+1)\\
    &\le \prod_{i=1}^n \prod_{\tau_{\le i}\in\{\tau_{\le i}^{(s)}\mid s\in[t-1]\}}t\\
    &= t^{\sum_{i=1}^n |\{\tau_{\le i}^{(s)}\mid s\in[t-1]\}|}.
    \end{align}
    Hence, $M_{t-1}\le t^{\sum_{i=1}^n |\{\tau_{\le i}^{(s)}\mid s\in[t-1]\}|}$ if $\cHth$ is the set of threshold algorithms.

    If $\cHth$ is the set of essentially threshold algorithms, we can similarly show that the number of distinct output patterns is at most $t^{\sum_{i=1}^n |\{\tau_{\le i}^{(s)}\mid s\in[t-1]\}|}$ since an essentially threshold algorithm is uniquely determined by a threshold algorithm.
    If $|\Pi|$ is $n!$ (e.g., the random order model and the general order model), we have $\sum_{i=1}^n |\{\tau_{\le i}^{(s)}\mid s\in[t-1]\}|\le \sum_{i=1}^n n!/i!\le en!$.
    In addition, the number $\max_{i\in[n]}|\{\tau_{\le i}^{(s)}\mid s\in[t-1]\}|=|\{\tau^{(s)}\mid s\in[t-1]\}|$ is at most $|\Pi|$ and we obtain $\sum_{i=1}^n |\{\tau_{\le i}^{(s)}\mid s\in[t-1]\}|\le n |\Pi|$.
    This completes the proof.
\end{claimproof}

By combining \Cref{lem:delta-epsilon,lem:general-finite} with $C=2$, we can set $\epsilon_1(t)$ and $\delta_1(t)$ to satisfy the assumption~\eqref{eq:sample-complexity}.

\subsubsection{Refined Analysis for the Random Order Model}
\label{subsubsec:refined-random-order}

For the random order model, the threshold $\theta_i$ depends only on the set $\{\tau(1),\dots,\tau(i-1)\}$ and $\tau(i)$. Therefore, we obtain a better bound of $M_{t-1}\le t^{n\cdot 2^{n-1}}$, since $\sum_{i=1}^n n\cdot\binom{n-1}{i-1}=n\cdot 2^{n-1}$, which yields the bound $O(B\sqrt{n 2^{n-1} T \log T})$.

\subsection{Proof of \Cref{thm:main}}
\label{subsec:pf-main}
    Let $\kappa \coloneqq \min(n|\Pi|,en!)$ for short.
    By \Cref{thm:uniform-comp-and-regret} and \Cref{col:delta-epsilon}, we have
    \begin{align}
    h_t^*(\fD)\ge g_{\zeta(t)}[\History{\zeta(t)}](\fD)
    \end{align}
    with probability at least $1-\delta(t)$.
    We upper bound $\epsilon(t)$ and $\delta(t)$ used in the definition of $\cE_t$ and $\cC_t$ when $t > t_0$.
    Since $\zeta(t)\ge t/2$ and $\ln(4x)/(x-1)$ is monotone decreasing for $x>1$, we have
    \begin{align}
        \epsilon(t)
        =\epsilon_1(\zeta(t))
        \le \epsilon_1(t/2)
        =6B \sqrt{\frac{2\kappa \ln (2t)}{(t-2)/2}}= 12B \sqrt{\frac{\kappa \ln (2t)}{t-2}}.
    \end{align}
    Moreover, by $t/2\le \zeta(t)\le (t+1)/2$ and $\kappa\ge 1$, we have
    \begin{align}
        \delta(t)
        &=2\delta_0(t-\zeta(t),\epsilon_1(\zeta(t)))+\delta_1(\zeta(t))\\
        &=4 \exp \left( -\frac{2(t-\zeta(t))}{B^2}\cdot\frac{72B^2 \kappa\ln(4\zeta(t))}{\zeta(t)-1} \right)+\frac{1}{(2\zeta(t))^{\kappa}}\\
        &\le 4\exp \left( -\frac{t-1}{B^2}\cdot\frac{72B^2 \kappa\ln (4\cdot t/2)}{(t-1)/2} \right)+\frac{1}{(2\cdot t/2)^\kappa}
        = \frac{4}{(2t)^{144\kappa}}+\frac{1}{t^\kappa}
        \le \frac{2}{t^{\kappa}}.
    \end{align}
    Then, it follows that $h_t^*(\fD)\ge g_{\zeta(t)}(\fD)$ with probability at least $1-2/t^{\kappa}$. 
    By \Cref{thm:uniform-comp-and-regret} and \Cref{col:delta-epsilon}, the regret of $h_t^*$ for $(\fD,p;T)$ is bounded as follows:
    \begin{align}
        \regret 
        &\le 2t_0B + \sum_{t=2t_0+1}^T(6\epsilon(t)+B\delta(t))
        \le 2t_0B + \sum_{t=2t_0+1}^T \left( 72B \sqrt{\frac{\kappa \ln (2t)}{t-2}} + \frac{2B}{t^{\kappa}}\right) \\
        &\le 2t_0B + 72B\sqrt{\kappa\ln(2T)}\cdot\sum_{t=1}^{T-2} \frac{1}{\sqrt{t}} + 4B\cdot\sum_{t=1}^T\frac{1}{t}\\
        &\le 2t_0B + 72B\sqrt{\kappa\ln(2T)}\cdot\left(1+\int_{t=1}^{T-2} \frac{\mathrm{d}t}{\sqrt{t}}\right) + 4B\cdot\int_{t=1}^{T}\frac{\mathrm{d}t}{t}\\
        &=   2t_0B + 72B\sqrt{\kappa\ln(2T)}\cdot\left(1+\left[2\sqrt{T-2}-2\right]\right) + 4B\cdot\left[\ln(T)-\ln(1)\right]\\
        &= O(Bt_0 + B\sqrt{\kappa T\log T}). \qedhere
    \end{align}

\subsection{Proof of \Cref{thm:PI-refined-restate}}
\label{subsec:pf-thm:PI-refined}

Jin et al.~\cite{jin2024sample} established that the sample complexity of the prophet inequality is at most $\frac{25\ln^2\left(2e/\delta\right)}{\epsilon^2}$, which implies that, with $t-1$ samples from the distribution $\fD \in \Dadv$, we can construct an algorithm $h'_t[\History{t}]$ such that the assumption \eqref{eq:sample-complexity} holds with
\begin{align}
\label{eq:epsilon-delta-PI}
    \epsilon_1(t) \coloneqq \frac{5 \ln (4et)}{\sqrt{t-1}} \text{ and }
    \delta_1(t)\coloneqq \frac{1}{2t}.
\end{align}
Define $\zeta(t)\coloneqq \lfloor (t+1)/2 \rfloor$.
For $t \in [T]$, let $h^{\mathrm{PI}*}_t$ be the algorithm that uses $h'_{\zeta(t)}$ when $\cC_t$ occurs and $g^{\mathrm{PI}}_{\zeta(t)}$ otherwise, where $\epsilon_1$ and $\delta_1$ are given by~\eqref{eq:epsilon-delta-PI}.

    For $t \ge 2$ by \Cref{thm:uniform-comp-and-regret}, we have
    \begin{align}
    h_t^*(\fD)\ge g_{\zeta(t)}[\History{\zeta(t)}](\fD)
    \end{align}
    with probability at least $1-\delta(t)$.

    We now evaluate $\epsilon(t)$ and $\delta(t)$ for $t\ge 3$.
    Since $\zeta(t)\ge t/2$ and $\epsilon(t)$ is monotonically decreasing, we have
    \begin{align}
        \epsilon(t)
        =\epsilon_1(\zeta(t))
        \le \epsilon_1(t/2) = \frac{5\sqrt{2}  \ln (2et)}{\sqrt{t-2}}.
    \end{align}
    Additionally, since $t/2 \leq \zeta(t) \le (t+1)/2$, we have
    \begin{align}
        \delta(t)
        &=2\delta_0(t-\zeta(t),\epsilon_1(\zeta(t)))+\delta_1(\zeta(t))\\
        &\le 2\delta_0(t-\zeta(t),\epsilon_1((t+1)/2))+\delta_1(t/2)\\
        &=4 \exp \left( -2(t-\zeta(t))\cdot\frac{50\ln^2 (2e(t+1))}{t-1} \right)
        + \frac{1}{t}\\
        &\le 4\exp \left( -(t-1)\cdot\frac{50\ln^2 (2e(t+1))}{t-1} \right)
        + \frac{1}{t}\\
        &=\frac{4}{(2e(t+1))^{50\ln(2e(t+1))}}+\frac{1}{t}
        \le \frac{2}{t}.
    \end{align}
    Then $h_t^*(\fD)$ dominates $g_{\zeta(t)}(\fD)$ with probability at least $1-2/t$ for $t> \lfloor\sqrt{T}\rfloor$.
    Furthermore, we can obtain a regret bound $O(\sqrt{T}\log T)$ by a similar evaluation as in the proof of~\Cref{thm:main}.

\subsection{Proof of \Cref{thm:lower-bound}}
\label{subsec:pf-lower-bound}

To prove the theorem, we first define three distributions $D_0,D_+,D_-$, supported on $\{0,1/2,1\}$. Let $\epsilon \in (0,1/6)$ be a parameter to be chosen appropriately later. 
The distribution $D_0$ assigns $1/3$ to each value. The distribution $D_+$ assigns probability $1/3 + \epsilon$ to $1$, probability $1/3 - \epsilon$ to $1/2$, and probability $1/3$ to $0$. $D_-$ assigns probability $1/3 - \epsilon$ to $1$, probability $1/3 + \epsilon$ to $1/2$, and probability $1/3$ to $0$. 
Let $\fD_0=(D_0,D_0;\Piid)$, $\fD_+=(D_+,D_+;\Piid)$, and $\fD_-=(D_-,D_-;\Piid)$.

For the distributions $\fD_+$ and $\fD_-$, we have the following claim.
\begin{claim}
    \label{claim:per-round-regret}    
    Let $\fD \in \{\fD_+,\fD_-\}$, and $p \in \{\prwd,\pbest\}$. 
    Suppose that $h$ is a deterministic algorithm for the optimal stopping problem $(\fD, p)$ that does not achieve the optimal online profit. Then, we have
    \begin{align}
        \opton(\fD;p) - h(\fD;p) \ge \epsilon / 12.
    \end{align}
\end{claim}

\begin{proof}[Proof of~\Cref{claim:per-round-regret}]
For any distribution supported on $\{0,1/2,1\}$, it suffices to consider algorithms that (i) always accept the first variable realizing $1$ and reject the first variable realizing $0$, and (ii) always accept the second variable if the first variable is rejected.
Thus, with $2$ variables $X_1$ and $X_2$, there are only two deterministic algorithms: $h_+$ and $h_-$. Here, $h_+$ rejects $X_1$ when $X_1=1/2$, while $h_-$ accepts $X_1$ when $X_1=1/2$.

For the expected reward maximization problem (i.e., the profit function is $\prwd$), the expected profit of $h_+$ and $h_-$ under $\fD_+$ and $\fD_-$ can be calculated as follows:
\begin{align}
    h_+(\fD_+;\prwd) &= \Big( \frac13 + \epsilon \Big)\cdot 1 + \Big( \frac13 - \epsilon \Big)\cdot \frac{1+\epsilon}{2} + \frac13\cdot\frac{1+\epsilon}{2}, \\
    h_-(\fD_+;\prwd) &= \Big( \frac13 + \epsilon \Big)\cdot 1 + \Big( \frac13 - \epsilon \Big)\cdot \frac{1}{2} + \frac{1}{3}\cdot\frac{1+\epsilon}{2}, \\
    h_+(\fD_-;\prwd) &= \Big( \frac{1}{3} - \epsilon \Big)\cdot 1 + \Big( \frac{1}{3} + \epsilon \Big)\cdot \frac{1-\epsilon}{2} + \frac{1}{3}\cdot\frac{1-\epsilon}{2}, \\
    h_-(\fD_-;\prwd) &= \Big( \frac13 - \epsilon \Big)\cdot 1 + \Big( \frac13 + \epsilon \Big)\cdot \frac{1}{2} + \frac{1}{3}\cdot\frac{1-\epsilon}{2}.
\end{align}
Then, when the given distribution is $\fD_+$, $h_+$ achieves the optimal online profit and $h_-$ incurs a per-round regret of at least
\begin{align}
    h_+(\fD_+;\prwd) - h_-(\fD_+;\prwd) = \Big(\frac{1}{3} - \epsilon \Big) \cdot \frac{\epsilon}{2} \ge \frac{\epsilon}{12},
\end{align}
where the inequality is due to $0<\epsilon \le 1/6$. When the given distribution is $\fD_-$, $h_-$ achieves the optimal online profit and $h_+$ incurs a per-round regret of at least
\begin{align}
    h_-(\fD_-;\prwd) - h_+(\fD_-;\prwd) = \Big(\frac{1}{3} + \epsilon \Big) \cdot \frac{\epsilon}{2} \ge \frac{\epsilon}{6}.
\end{align}

Similarly, for the best choice probability maximization problem (i.e., the profit function is $\pbest$), the expected profit of $h_+$ and $h_-$ under $\fD_+$ and $\fD_-$ can be calculated as follows:
\begin{align}
    h_+(\fD_+;\pbest) &= \Big( \frac13 + \epsilon \Big)\cdot 1 + \Big( \frac13 - \epsilon \Big)\cdot \frac{2}{3} + \frac{1}{3}\cdot 1, \\
    h_-(\fD_+;\pbest) &= \Big( \frac13 + \epsilon \Big)\cdot 1 + \Big( \frac13 - \epsilon \Big)\cdot \Big(\frac{2}{3} - \epsilon \Big)+ \frac{1}{3}\cdot 1, \\
    h_+(\fD_-;\pbest) &= \Big( \frac{1}{3} - \epsilon \Big)\cdot 1 + \Big( \frac{1}{3} + \epsilon \Big)\cdot \frac{2}{3} + \frac{1}{3}\cdot 1, \\
    h_-(\fD_-;\pbest) &= \Big( \frac13 - \epsilon \Big)\cdot 1 + \Big( \frac13 + \epsilon \Big)\cdot \Big(\frac{2}{3} + \epsilon \Big) + \frac{1}{3}\cdot 1.
\end{align}
Then, when the given distribution is $\fD_+$, $h_+$ achieves the optimal online profit and $h_-$ incurs a per-round regret of at least
\begin{align}
    h_+(\fD_+;\pbest) - h_-(\fD_+;\pbest) = \Big(\frac{1}{3} - \epsilon \Big) \cdot \epsilon \ge \frac{\epsilon}{6},
\end{align}
where the inequality is due to $0<\epsilon \le 1/6$. When the given distribution is $\fD_-$, $h_-$ achieves the optimal online profit and $h_+$ incurs a per-round regret of at least
\begin{align}
    h_-(\fD_-;\pbest) - h_+(\fD_-;\pbest) = \Big(\frac{1}{3} + \epsilon \Big) \cdot \epsilon \ge \frac{\epsilon}{3}.
\end{align}
\end{proof}


    By Yao's principle, it suffices to show a regret lower bound for any deterministic selection rule against the uniform distribution over $\{\fD_+, \fD_-\}$.
    Fix an arbitrary deterministic selection rule, and let $h_t$ be a deterministic algorithm chosen in round $t$. Since we consider a deterministic rule, $h_t$ is determined by the previous samples $\History{t}=(\bX^{(1)},\ldots,\bX^{(t-1)})$ and does not have any other randomness, i.e., $h_t$ is either $h_+$ or $h_-$ for each round $t \in [T]$.

    For any $T$ samples $\mathbf{S}_T = (\bX^{(1)},\ldots,\bX^{(T)}) \in \{0,1/2,1\}^{2 \times T}$,
    let $f_+(\mathbf{S}_T)$ (resp. $f_-(\mathbf{S}_T)$) denote the number of $t$ such that $h_t = h_+$ (resp. $h_t = h_-$) when $\mathbf{S}_T$ is realized.
    Let $\Pr_+[\cdot]$, $\Pr_-[\cdot]$, and $\Pr_0[\cdot]$ denote the probability when the underlying distribution is $\fD_+$, $\fD_-$, and $\fD_0$, respectively. Similarly, $\E_+[\cdot]$, $\E_-[\cdot]$, and $\E_0[\cdot]$ denote the expectation when the underlying distribution is $\fD_+$, $\fD_-$, and $\fD_0$, respectively.
    We use the following claim.
    \begin{claim}
        \label{claim:lower-bound}
        The following inequalities hold:
        \begin{align}
            \E_+[f_+(\mathbf{S}_T)] &\le \E_0[f_+(\mathbf{S}_T)] + 2\epsilon T \sqrt{T}, \label{eq:claim-plus} \\
            \E_-[f_-(\mathbf{S}_T)] &\le \E_0[f_-(\mathbf{S}_T)] + 2\epsilon T \sqrt{T}. \label{eq:claim-minus}
        \end{align}
    \end{claim}
    \begin{proof}[Proof of~\Cref{claim:lower-bound}]
        We first prove \eqref{eq:claim-plus}. It follows that
        \begin{align}
            \E_+[f_+(\mathbf{S}_T)] - \E_0[f_+(\mathbf{S}_T)]
            & = \sum_{\mathbf{S}_T \in \{0,1/2,1\}^{2 \times T}} f_+(\mathbf{S}_T) \Big( \Pr_+[\mathbf{S}_T] - \Pr_0[\mathbf{S}_T] \Big) \\
            & \le \sum_{\substack{\mathbf{S}_T \in \{0,1/2,1\}^{2 \times T} \\ \mathbf{S}_T: \Pr_+[\mathbf{S}_T] \ge \Pr_0[\mathbf{S}_T]}} f_+(\mathbf{S}_T) \Big( \Pr_+[\mathbf{S}_T] - \Pr_0[\mathbf{S}_T] \Big) \\
            & \le T \cdot \sum_{\substack{\mathbf{S}_T \in \{0,1/2,1\}^{2 \times T} \\ \mathbf{S}_T: \Pr_+[\mathbf{S}_T] \ge \Pr_0[\mathbf{S}_T]}} \Big( \Pr_+[\mathbf{S}_T] - \Pr_0[\mathbf{S}_T] \Big) \\
            & = \frac{T}{2} \cdot \sum_{\mathbf{S}_T \in \{0,1/2,1\}^{2 \times T}} \big| \Pr_+[\mathbf{S}_T] - \Pr_0[\mathbf{S}_T] \big| \\
            & \le \frac{T}{2} \cdot \sqrt{2 \KLdiv{\Pr_0}{\Pr_+}},
        \end{align}
        where the last inequality is due to Pinsker's inequality and $\KLdiv{\Pr_0}{\Pr_+}$ denotes the KL-divergence between $\Pr_0$ and $\Pr_+$, i.e.,
        \begin{align}
            \KLdiv{\Pr_0}{\Pr_+} \coloneqq \sum_{\mathbf{S}_T \in \{0,1/2,1\}^{2 \times T}} \Pr_0[\mathbf{S}_T] \ln \frac{\Pr_0[\mathbf{S}_T]}{\Pr_+[\mathbf{S}_T]}.
        \end{align}
        Since $\mathbf{S}_T = ((X_1^{(1)},X_2^{(1)}),\ldots,(X_1^{(T)},X_2^{(T)}))$ follows product distribution $D_0^{2T}$ or $D_+^{2T}$, we can evaluate $\KLdiv{\Pr_0}{\Pr_+}$ as follows:
        \begin{align}
            \KLdiv{\Pr_0}{\Pr_+}
            & = 2T \cdot \KLdiv{D_0}{D_+} \\
            & = 2T \left( \frac13 \ln \frac{1/3}{1/3+\epsilon} + \frac13 \ln \frac{1/3}{1/3-\epsilon} + \frac13 \ln \frac{1/3}{1/3} \right) \\
            & = - \frac{2T}{3} \ln(1-9\epsilon^2) \\
            & \le \frac{2T}{3} \cdot 4 \ln(4/3) \cdot 9\epsilon^2,
        \end{align}
        where the last inequality is due to the convexity of $-\ln(1-x)$ and $9\epsilon^2 \le 1/4$. Then, by letting $C=2\sqrt{3\ln(4/3)}\approx 1.86 <2$, we have \eqref{eq:claim-plus} as follows:
        \begin{align}
            \E_+[f_+(\mathbf{S}_T)] - \E_0[f_+(\mathbf{S}_T)]
            \le \frac{T}{2} \sqrt{2 \cdot \frac{2T}{3} \cdot 4 \ln(4/3) \cdot 9\epsilon^2}
            = C\epsilon T \sqrt{T} < 2\epsilon T \sqrt{T}.
        \end{align}

        The proof of \eqref{eq:claim-minus} is analogous.
    \end{proof}
    We return to the proof of Theorem~\ref{thm:lower-bound}.
    By~\Cref{claim:lower-bound}, it holds that
    \begin{align}
        \frac12 (\E_+[f_+(\mathbf{S}_T)] + \E_-[f_-(\mathbf{S}_T)])
        & \le \frac12 \cdot \E_0[f_+(\mathbf{S}_T) + f_-(\mathbf{S}_T)] + 2\epsilon T \sqrt{T} 
        = \frac{T}{2} + 2\epsilon T \sqrt{T}.
    \end{align}
    This implies that, against the uniform distribution over $\{\fD_+,\fD_-\}$, any deterministic rule can select the ``correct algorithm'' ($h_+$ for distribution $\fD_+$ or $h_-$ for distribution $\fD_-$) only $T/2 + 2\epsilon T \sqrt{T}$ times in expectation. By~\Cref{claim:per-round-regret}, any deterministic selection rule incurs cumulative regret at least
    \begin{align}
        \frac{\epsilon}{12} \cdot \Big( T - \big(T/2 + 2\epsilon T \sqrt{T}\big) \Big)
        = \frac{\epsilon}{12} \cdot \big(T/2 - 2\epsilon T \sqrt{T}\big).
    \end{align}
Tuning $\epsilon = 1/(8\sqrt{T})$ yields the lower bound $\Omega(\sqrt{T})$. 

\section{Linear Regret from Repeated Use of a Competitive Algorithm}
\label{sec:example}

In this section, we show an example in which simply adopting a single-sample algorithm in every round leads to a linear regret.

Let $\epsilon$ be a positive real that is less than $1/2$.
Consider the repeated prophet inequality with two random variables and $T$ rounds where:
\begin{itemize}
    \item $X_1 = 1/2$ with probability $1$,
    \item $X_2 = 1$ with probability $1/2+\epsilon$, $X_2=0$ with probability $1/2-\epsilon$.
\end{itemize}
For this instance, the single-sample online algorithm~\cite{rubinstein_et_al:LIPIcs:2020:11745} for the prophet inequality operates as follows:
\begin{enumerate}
    \item If the first sample is $(1/2,1)$ which occurs with probability $1/2+\epsilon)$, then the single-sample algorithm chooses the first value $\geq 1$, and hence the expected reward in each round is $1/2+\epsilon$.
    \item If the first sample is $(1/2,0)$, then the single-sample algorithm always chooses the value $1/2$ of $X_1$.
\end{enumerate}
We assume that, for the first round, it accepts $X_2$.
Then, the expected total reward of the single-sample online algorithm over $T$ rounds is 
\begin{align}
\MoveEqLeft
(1/2+\epsilon)\cdot (1+(T-1)(1/2+\epsilon)) + (1/2-\epsilon)\cdot (0+(T-1)\cdot1/2) \\
&= T\cdot (1/2+\epsilon) - (T-1)\cdot (\epsilon/2-\epsilon^2)
\end{align}

On the other hand, the optimal online algorithm always chooses the value of $X_2$ because $\mathbb{E}[X_2]=1/2+\epsilon > 1/2=\mathbb{E}[X_1]$.
Then the expected total reward of the optimal online algorithm is $T\cdot (1/2+\epsilon)$.

Therefore, the regret is $(T-1)(\epsilon/2-\epsilon^2)$, which increases linearly with respect to $T$.

\end{document}